\newcommand{\toy}{\mathcal{TOY}} 
\newcommand{\NAT}{\bbbn} 
\newcommand{\REAL}{\bbbr} 
\newcommand{\qdom}{$\mathcal{D}$} 
\newcommand{\qdomm}{\mathcal{D}} 
 \newcommand{\aqdomm}{D \setminus\{\bot\}}
\newcommand{\B}{\mathcal{B}} 
\newcommand{\U}{\mathcal{U}} 
\newcommand{\W}{\mathcal{W}} 
\newcommand{\cdom}[1]{\mathcal{C}_{#1}} 
\newcommand{\rdom}{\mathcal{R}} 
\newcommand{\qlp}[1]{QLP({#1})} 
\newcommand{\clp}[1]{CLP({#1})} 
\newcommand{\diff}{~{\Longleftrightarrow_{\mathrm{def}}}~} 
\newcommand{\union}{\bigcup} 
\newcommand{\inter}{\bigcap} 
\newcommand{\supr}{\bigsqcup} 
\newcommand{\infi}{\bigsqcap} 
\newcommand{\Dentail}{~{\succcurlyeq_{\qdomm}}~} 
\newcommand{\app}{\,\hat{}\,} 
\newcommand{\Prog}{\mathcal{P}} 
\newcommand{\UProg}{\mathcal{P_U}}
\newcommand{\WProg}{\mathcal{P_W}}
\newcommand{\Var}{\mathcal V\!ar} 
\newcommand{\War}{\mathcal W\!ar} 
\newcommand{\varset}[1]{\mathrm{var}(#1)} 
\newcommand{\warset}[1]{\mathrm{war}(#1)} 
\newcommand{\Tp}{\mathrm{T}_{\Prog}} 
\newcommand{\Mp}{\mathcal{M}_{\Prog}} 
\newcommand{\M}[1]{\mathcal{M}_{#1}} 
\newcommand{\intd}{\mathrm{Int}_{\Sigma}(\qdomm)} 
\newcommand{\sust}{\mathrm{Subst}_{\Sigma}} 
\newcommand{\sustd}{\sust(\qdomm)} 
\newcommand{\I}{\mathcal{I}} 
\newcommand{\J}{\mathcal{J}} 
\newcommand{\at}[2]{#1\,\sharp\,#2} 
\newcommand{\ats}[1]{\overline{#1}} 
\newcommand{\qgets}[1]{\gets #1 -} 
\newcommand{\sep}{~{\talloblong}~} 
\newcommand{\sld}[1]{SLD({#1})} 
\newcommand{\qhl}[1]{QHL({#1})} 
\newcommand{\qhld}{\vdash_{\mathrm{QHL}(\qdomm)}}
\newcommand{\qhldn}[1]{\vdash_{\mathrm{QHL}(\qdomm)}^{#1}}
\newcommand{\resx}{\Vdash}
\newcommand{\qres}[1]{\resx_{#1}}
\newcommand{\qresn}[2]{\resx^{#1}_{#2}}
\newcommand{\qsol}[1]{QSol_{\Prog}(#1)}
\newcommand{\qsoln}[2]{QSol_{\Prog}^{#1}(#2)}
\title{A Generic Scheme for \\ Qualified Logic Programming\thanks{Research partially supported by projects MERIT-FORMS (TIN2005-09027-C03-03) and PROMESAS-CAM(S-0505/TIC/0407)}}
\author{Mario Rodr\'{i}guez-Artalejo and Carlos A. Romero-D\'{i}az}
\authorrunning{M.~Rodr\'iguez-Artalejo and C.A.~Romero-D\'iaz}
\institute{Departamento de Sistemas Inform\'aticos y Computaci\'on\\
Universidad Complutense de Madrid, Spain\\
\email{mario@sip.ucm.es} and \email{cromdia@fdi.ucm.es}}
\begin{document}
\maketitle

\begin{abstract}
Uncertainty in Logic Programming has been investigated since about 25 years, publishing papers dealing with various approaches to semantics and different applications. This report is  intended as a first step towards the investigation of  {\em qualified computations} in  Constraint Functional Logic Programming, including 
{\em uncertain computations}  as a particular case.
We revise an early proposal, namely van Emden's {\em Quantitative Logic Programming} \cite{VE86}, and we improve it in two ways. Firstly, we generalize van Emden's $QLP$ to a generic scheme $\qlp{\qdomm}$ parameterized by any given {\em Qualification Domain} $\qdomm$, which must be a lattice satisfying certain natural axioms. We present several interesting instances for $\qdomm$, one of which corresponds to van Emden's $QLP$. Secondly, we generalize van Emden's results by  providing stronger ones, concerning both semantics and goal solving. We present \emph{Qualified $SLD$ Resolution} over $\qdomm$, a sound and strongly complete goal solving procedure for $\qlp{\qdomm}$, which is applicable to open goals  and can be efficiently implemented using $CLP$ technology over any constraint domain $\cdom{\qdomm}$ able to deal with qualification constraints  over $\qdomm$. 
We have developed a prototype implementation of some instances of the $\qlp{\qdomm}$ scheme
(including  van Emden's $QLP$) on top of the $CFLP$ system $\toy$.

\noindent {\bf Keywords}: Quantitative Logic Programming, Qualification Domains, Qualification Constraints.
\end{abstract}

\section{Introduction}\label{Introduction}


The investigation of uncertainty in logic programming has proceeded along various lines during the last 25 years. A recent recollection by V. S. Subrahmanian \cite{Sub07} highlights some phases in the evolution of the topic from the viewpoint of a committed researcher.

Research on the field has dealt with various approaches to semantics, as well as different applications. One of the earliest approaches was {\em Quantitative Logic Programming}, $QLP$ for short. This can be traced back to a paper by Shapiro \cite{Sha83}, who proposed to use real numbers in the interval $(0,1]$ as {\em certainty factors}, as well as {\em certainty functions} for propagating certainty factors from the bodies to the heads of program clauses. Subsequently, van Emden \cite{VE86} considered $QLP$ with an  {\em attenuation factor} $f \in (0,1]$ attached to the implication of each program clause and restricted his attention to the certainty function which propagates to a clause head the certainty factor $f \times b$, where $f$ is the clause's attenuation factor and $b$ is the minimum of the certainty factors known for the body atoms. Van Emden's approach was less general than Shapiro's because of the fixed  choice of a particular certainty function, but it allowed to prove more general results on model theoretic and fixpoint semantics, similar to those previously obtained in \cite{VEK76,AVE82} for classical Logic Programming. Moreover, \cite{VE86} gave a procedure for computing the certainty of atoms in the least Herbrand model of a given program, by applying an alpha-beta heuristic to the atoms' and/or search trees. This procedure worked only for ground atoms having a finite search tree.

Following these beginnings, logic programming with uncertainty developed in various directions. Subrahmanian \cite{Sub87} proposed an alternative to \cite{VE86}, using a different lattice of numeric values (aiming at a separate representation of certainty degrees for truth an falsity) as well as clauses whose atoms were annotated with values from this lattice. Neither certainty functions nor attenuation factors were used in this approach, which was extended in \cite{Sub88} to provide goal solving procedures enjoying stronger soundness and completeness results. As a brief summary of some significant later contributions let us mention: generalized {\em annotated logic programs} \cite{KS92}, a quite general framework which will be discussed in more detail in Section \ref{Conclusions}; semantics based on {\em bilattices} of generalized truth values with both a `knowledge' order and a `truth' order \cite{Fit91}; logic programming with probabilistic semantics and applications to deductive databases \cite{NS92,NS93}; quantitative and probabilistic constraint logic programming and applications to natural language processing \cite{Rie98phd}; {\em hybrid probabilistic programs} \cite{DS00}; probabilistic agent programs \cite{DNS00} and their extension to deal with both time and uncertainty \cite{DKS06}; logic programs with similarity based unification and applications to flexible data retrieval 
\cite{AF99,Ses02,GMV04,LSS04}; and functional logic programming with similarity based unification \cite{MP06a}.


We are interested in a long-term research project aiming at a generalization of existing work on logic  programming with uncertainty.
The generalization we plan to develop will operate in two directions: 
a) extending logic programming languages to  more expressive multi-paradigm declarative languages supporting {\em functions} and {\em constraints}; 
and b) generalizing uncertain truth values to so-called {\em qualification values}, attached to computed answers and intended to  
measure the degree in which such computed answers satisfy various user's expectations.
In this setting, (constraint) logic programming with uncertainty becomes the particular case in which no  functional programming features are used and qualification values are just uncertain truth values. 
As a first step, we present in this report a generalization  of the early $QLP$ proposal by van Emden \cite{VE86}, which is still appealing because of its neat semantics. Syntactically, our proposal is very close to van Emden's $QLP$: we use qualified definite Horn clauses $A \qgets{d} \ats{B}$ with an attenuation value $d$ attached to the implication and no annotations attached to the atoms. However, we improve \cite{VE86} in the two ways summarized in the abstract: firstly, we replace numeric certainty values (in particular, those playing the role of attenuation factors in program clauses) by  qualification values belonging to a parametrically given {\em Qualification Domain} $\qdomm$ with a lattice structure, which provides abstract operations generalizing the use of $min$ (minimum) and $\times$ (product) in \cite{VE86}. In this way we get a {\em generic scheme} $\qlp{\qdomm}$. Secondly, we present stronger semantic results and a sound and strongly complete goal solving  procedure called \emph{Qualified $SLD$ Resolution} over $\qdomm$(in symbols, $\sld{\qdomm}$), which extends $SLD$ resolution using {\em annotated atoms} and {\em qualification constraints} over $\qdomm$. The $\qlp{\qdomm}$ scheme enjoys nice semantic properties and has interesting instances that can be efficiently implemented using $CLP$ technology: $\qlp{\qdomm}$ programs and goals can be easily translated into $\clp{\cdom{\qdomm}}$ for any choice of a constraint domain $\cdom{\qdomm}$ able to compute with qualification constraints
over $\qdomm$.


We have developed a prototype implementation of some instances of the $\qlp{\qdomm}$ scheme
(including  van Emden's $QLP$) on top of the $CFLP$ system $\toy$.

After this introduction, the rest of the report is structured as follows: Section \ref{Domains} presents the axioms for qualification domains $\qdomm$, showing some basic instances and proving that the class of such domains is closed under cartesian product. Section \ref{Language} presents the syntax and declarative semantics of the $\qlp{\qdomm}$ scheme. Section \ref{Solving} presents  qualified $SLD$ resolution over $\qdomm$ with its soundness and strong completeness properties. Section \ref{Implementation} presents the general implementation technique for $\qlp{\qdomm}$
that we have used to implement some instances  of the scheme (including  van Emden's $QLP$) on top of the $CFLP$ system $\toy$. Finally, Section \ref{Conclusions} presents our conclusions and plans for future work. Appendix \ref{Proofs} includes detailed proofs for the main results. Other proofs that have been ommitted or sketched can be found in \cite{Rom07} (in Spanish).
\section{Qualification Domains} \label{Domains}


By definition, a  {\em Qualification Domain} is  any structure $\qdomm = \langle D, \sqsubseteq, \bot, \top, \circ \rangle$ such that:

\begin{enumerate}
    \item $\langle \qdomm, \sqsubseteq, \bot, \top \rangle$ is a lattice with extreme points $\bot$ and $\top$ w.r.t. the partial ordering $\sqsubseteq$. For given elements  $d, e \in D$, we  write $d\, \sqcap\, e$ for the {\em greatest lower bound} ($glb$) of $d$ and $e$ and $d\, \sqcup\, e$ for the {\em least upper bound} ($lub$) of $d$ and $e$. We also write $d \sqsubset e$ as abbreviation for $d \sqsubseteq e\, \land\, d \neq e$.
    \item $\circ : D \times D \rightarrow D$, called {\em attenuation operation}, verifies the following axioms:
        \begin{enumerate}
            \item $\circ$ is associative, commutative and monotonic w.r.t. $\sqsubseteq$.
            \item $\forall d \in D:\, d \circ \top = d$.
            \item $\forall d \in D:\, d \circ \bot = \bot$.
            \item $\forall d, e \in D \setminus \{\bot,\top\}:\, d \circ e\, \sqsubset\, e$.
            \item $\forall d, e_1, e_2 \in D:\, d  \circ (e_1 \sqcap e_2) = d \circ e_1\, \sqcap\, d \circ e_2$.
        \end{enumerate}
\end{enumerate}


In the rest of the report, $\qdomm$ will generally denote an arbitrary qualification domain. For any finite $S = \{e_{1}, e_{2}, \ldots, e_{n}\} \subseteq D$, the $glb$ of $S$ (noted as $\bigsqcap S$) exists and can be computed as $e_{1} \sqcap e_{2} \sqcap \cdots \sqcap e_{n}$ (which reduces to $\top$ in the case $n = 0$). As an easy consequence of the axioms, one gets the identity $d \circ \bigsqcap S =  \bigsqcap \{d \circ e \mid e \in S\}$. We generalize van Emden's $QLP$ to a generic scheme $\qlp{\qdomm}$ which uses qualification values $d \in D \setminus \{\bot\}$ in place of certainty values $d \in (0,1]$, the $glb$ operator $\bigsqcap$ in place of the minimum operator $min$, and the attenuation operator $\circ$ in place of the multiplication operator $\times$. Three interesting instances of qualification domains are shown below.

\noindent\textbf{The Domain of Classical Boolean Values:} $\B = (\{0,1\}, \le, 0, 1, \land)$, where $0$ and $1$ stand for the two classical truth values  \emph{false} and \emph{true}, $\leq$ is the usual numerical ordering over $\{0,1\}$, and $\land$ stands for the classical conjunction operation over $\{0,1\}$. The instance $\qlp{\B}$ of our $\qlp{\qdomm}$ scheme behaves as classical Logic Programming.

\noindent\textbf{The Domain  of van Emden's Uncertainty Values:} $\U = (\mbox{U}, \leq, 0, 1,\times)$, where $\mbox{U} = [0,1] = \{d \in \REAL \mid 0 \le d \le 1\}$, $\le$ is the usual numerical ordering, and $\times$ is the multiplication operation. In this domain,  the top element $\top$ is $1$ and the greatest lower bound $\bigsqcap S$ of a finite $S \subseteq \mbox{U}$ is the minimum value min(S), which is $1$ if $S = \emptyset$. For this reason, the instance $\qlp{\U}$ of our $\qlp{\qdomm}$ scheme behaves as van Emden's $QLP$.

\noindent\textbf{The Domain of Weight Values:} $\W = (\mbox{P}, \ge, \infty, 0, +)$, where $\mbox{P} = [0,\infty] = \{d \in \REAL \cup \{\infty\} \mid d \ge 0\}$, $\geq$ is the reverse of the usual numerical ordering (with $\infty \ge d$ for any $d \in \mbox{P}$), and $+$ is the addition operation (with $\infty + d = d + \infty = \infty$ for any $d \in \mbox{P}$). In this domain,  the top element $\top$ is $0$ and the greatest lower bound $\bigsqcap S$ of a finite $S \subseteq \mbox{P}$ is the maximum value max(S), which is $0$ if $S = \emptyset$. When working in the instance $\qlp{\W}$ of our $\qlp{\qdomm}$ scheme one propagates to a clause head the qualification value $f + b$, where $f$ is the clause's 'attenuation factor' and $b$ is the maximum of the qualification values known for the body atoms. Therefore, qualification values in  the instance $\qlp{\W}$ of our $\qlp{\qdomm}$ scheme behave as a weighted measure of the depth of proof trees.

It is easily checked that the axioms of qualification domains are satisfied by $\B$, $\U$ and $\W$. In fact, the axioms have been chosen as a natural  generalization of some basic properties satisfied by the ordering $\leq$ and the operation $\times$ in $\U$. In general, the values belonging to a qualification domain are intended to qualify logical assertions by measuring the degree in which they satisfy some kind of user's expectations. In this way, one can think of $\U$ values as measuring the degree of truth, $\W$ values as measuring proofs sizes, etc.


Given two qualification domains  $\qdomm_i = \langle D_i, \sqsubseteq_i, \bot_i, \top_i, \circ_i \rangle$ ($i \in \{1, 2\}$), their {\em cartesian product} $\qdomm_1 \times \qdomm_2$ is defined as $\qdomm =_{\mathrm{def}} \langle D, \sqsubseteq, \bot, \top, \circ \rangle$, where  $D =_{\mathrm{def}} D_1 \times D_2$, the partial ordering $\sqsubseteq$ is defined as $(d_1,d_2) \sqsubseteq (e_1,e_2) \diff d_1 \sqsubseteq_1 e_1$ and $d_2 \sqsubseteq_2 e_2$, $\bot =_{\mathrm{def}} (\bot_1, \bot_2)$, $\top =_{\mathrm{def}} (\top_1, \top_2)$, and the attenuation operator $\circ$ is defined as $(d_1,d_2) \circ (e_1,e_2) =_{\mathrm{def}} (d_1 \circ_1 e_1, d_2 \circ_2 e_2)$. The class of the qualification domains is closed under cartesian products, as stated in the following result.

\begin{proposition}\label{propQDomProduct}
The cartesian product $\qdomm = \qdomm_1 \times \qdomm_2$ of two given qualification domains is always another qualification domain.
\end{proposition}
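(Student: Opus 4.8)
The plan is to verify, axiom by axiom, that the structure $\qdomm = \qdomm_1 \times \qdomm_2$ fits the definition of a qualification domain, exploiting throughout that the order, the two extreme points and the attenuation operation of $\qdomm$ are all defined coordinatewise, so that most axioms reduce to the conjunction of the corresponding axioms for $\qdomm_1$ and for $\qdomm_2$.

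First I would settle the lattice part. One checks routinely that the coordinatewise relation $\sqsubseteq$ is a partial order with least element $(\bot_1,\bot_2)$ and greatest element $(\top_1,\top_2)$, and that the coordinatewise pair $(d_1 \sqcap_1 e_1,\, d_2 \sqcap_2 e_2)$ is the greatest lower bound of $(d_1,d_2)$ and $(e_1,e_2)$ in $\qdomm$ --- a tuple is a common lower bound of two tuples exactly when it is one in each coordinate --- and dually for least upper bounds. This makes $\langle \qdomm, \sqsubseteq, \bot, \top \rangle$ a lattice and, what matters for axiom 2(e) below, identifies the meet $\sqcap$ of $\qdomm$ with the coordinatewise meet.

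Next come the straightforward attenuation axioms. Associativity, commutativity and monotonicity of $\circ$ (axiom 2(a)) are inherited coordinatewise from $\circ_1$ and $\circ_2$; axiom 2(b) ($d \circ \top = d$) and axiom 2(c) ($d \circ \bot = \bot$) hold because $\top$ and $\bot$ are the coordinatewise extremes and the identities already hold in each $\qdomm_i$; and axiom 2(e) ($d \circ (e_1 \sqcap e_2) = d \circ e_1 \sqcap d \circ e_2$) follows by combining the coordinatewise description of $\sqcap$ with axiom 2(e) of $\qdomm_1$ and $\qdomm_2$.

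The remaining axiom, 2(d) --- strict attenuation, $d \circ e \sqsubset e$ for all $d, e \in D \setminus \{\bot, \top\}$ --- is the one I expect to be the main obstacle, since strictness does not distribute over coordinates. I would split it into the inequality and its strictness. The inequality $d \circ e \sqsubseteq e$ is still coordinatewise: in each coordinate $i$, a short case analysis --- if $d_i$ or $e_i$ lies in $\{\bot_i, \top_i\}$ use axioms 2(b)/2(c) of $\qdomm_i$, otherwise use axiom 2(d) of $\qdomm_i$ --- gives $d_i \circ_i e_i \sqsubseteq_i e_i$ in every case. For strictness, the plan is to characterize in each coordinate exactly when $d_i \circ_i e_i = e_i$ can happen; the axioms of $\qdomm_i$ force such an equality to imply $e_i = \bot_i$ or $d_i = \top_i$. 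One then has to argue that if this equality held in both coordinates at once it would be incompatible with $d \neq \top$ and $e \neq \bot$, so that some coordinate must decrease strictly. Carrying this two-coordinate case distinction through to a contradiction is the heart of the proof; once a strictly decreasing coordinate is located, it combines with the coordinatewise $\sqsubseteq$ to give $d \circ e \sqsubset e$, completing the verification.
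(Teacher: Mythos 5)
Your decomposition is the right one, and you have correctly isolated the only non-routine point: the lattice structure and axioms 2(a), 2(b), 2(c), 2(e) are indeed inherited coordinatewise, and your single-coordinate characterization (that $d_i \circ_i e_i = e_i$ forces $d_i = \top_i$ or $e_i = \bot_i$) is correct. The gap is in the final step: the contradiction you hope to extract from ``equality in both coordinates'' does not exist, because the two disjunctions can be satisfied \emph{crosswise}. Take $d_1 = \top_1$ and $e_2 = \bot_2$ while $d_2 \neq \top_2$ and $e_1 \neq \bot_1$: then $d_i \circ_i e_i = e_i$ in both coordinates, yet $d \neq \top$ and $e \neq \bot$. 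Concretely, in $\B \times \B$ put $d = e = (1,0)$: both lie in $D \setminus \{\bot,\top\}$ and $d \circ e = (1 \land 1,\, 0 \land 0) = (1,0) = e$, so the strict inequality $d \circ e \sqsubset e$ fails. Hence axiom 2(d), as literally stated, is \emph{not} preserved by the cartesian product construction, and no amount of case analysis will close this gap.

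You should be aware that this is not a defect you introduced: the paper's own proof dismisses all five attenuation axioms with the remark that they ``are easily proved,'' and thereby overlooks exactly the difficulty you flagged. Your non-strict part is fine (indeed $d \circ e \sqsubseteq e$ already follows in each coordinate from monotonicity and $\top \circ_i e_i = e_i$, without any case split), but the strict part of 2(d) is genuinely false for products containing mixed elements such as $(\top_1,\bot_2)$. A correct statement requires either weakening axiom 2(d) or adjusting the product (e.g.\ restricting its carrier), neither of which the paper does. In short: your attempt is more honest than the paper's proof, but the proposition cannot be established in the form given.
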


\begin{proof}
According to the axiomatic definition of qualification domains, one must prove two items:

\begin{enumerate}

    \item $\qdomm$ is a lattice with extreme points $\bot$ and $\top$ w.r.t. the partial ordering $\sqsubseteq$. This is easily checked using the definition of $\sqsubseteq$ in the product domain. In particular, one gets the equalities $(d_1,d_2) \sqcap (e_1,e_2) = (d_1 \sqcap_1 e_1,d_2 \sqcap_2 e_2)$ and $(d_1,d_2) \sqcup (e_1,e_2) = (d_1 \sqcup_1 e_1,d_2 \sqcup_2 e_2)$.

    \item $\circ$ satisfies the five axioms required for attenuation operators, i.e.:
        \begin{enumerate}
            \item $\circ$ is associative, commutative and monotonic w.r.t. $\sqsubseteq$.
            \item $\forall (d_1,d_2) \in D_1 \times D_2$ : $(d_1,d_2) \circ \top = (d_1,d_2)$.
            \item $\forall (d_1,d_2) \in D_1 \times D_2$ : $(d_1,d_2) \circ \bot = \bot$.
            \item $\forall (d_1,d_2), (e_1,e_2) \in D_1 \times D_2 \setminus \{\bot,\top\}$ : $(d_1,d_2) \circ (e_1, e_2) \sqsubset (e_1,e_2)$.
            \item $\forall (d_1,d_2), (e_1,e_2), (e'_1, e'_2) \in D_1 \times D_2$ : $(d_1,d_2) \circ ((e_1, e_2) \sqcap (e'_1, e'_2)) = ((d_1, d_2) \circ (e_1, e_2)) \sqcap ((d_1, d_2) \circ (e'_1, e'_2))$.
        \end{enumerate}

All these conditions are easily proved, using the hypothesis that both $\qdomm_1$ and $\qdomm_2$ are qualification domains as well as the construction of $\qdomm$ as cartesian product of $\qdomm_1$ and $\qdomm_2$. \qed
\end{enumerate}
\end{proof}

Intuitively, each value $(d_1,d_2)$ belonging to a product domain $\qdomm_1 \times \qdomm_2$ imposes the qualification $d_1$ {\em and also} the qualification $d_2$. In particular, values $(c,d)$ belonging to the product domain $\U \times \W$ impose two qualifications, namely: a certainty value greater or equal than $c$ and a proof tree with depth less or equal than $d$. These intuitions indeed correspond to the declarative and operational semantics formally defined in Sections \ref{Language} and \ref{Solving}.

\section{Syntax and Semantics of QLP(\qdom)} \label{Language}


\subsection{Programs, Interpretations and Models}

We assume a {\em signature} $\Sigma$ providing free function symbols (a.k.a. constructors) and predicate symbols. {\em Terms} are built from constructors and variables from a countably infinite set $\Var$, disjoint from $\Sigma$. {\em Atoms} are of the form $p(t_1,\, \ldots,\, t_n)$ (abbreviated as $p(\overline{t_n})$) where $p$ is a $n$-ary predicate symbol and $t_i$ are terms. We write $At_{\Sigma}$ for the set of all the atoms, called the {\em open Herbrand base}. A $\qlp{\qdomm}$ program $\Prog$ is a set of {\em qualified definite Horn clauses} of the form $A \qgets{d} \ats{B}$ where $A$ is an atom, $\ats{B}$ a finite conjunction of atoms and $d \in D \setminus \{\bot\}$  is the {\em attenuation value} attached to the clause's implication. In $\qlp{\B}$ programs, the only choice for $d$ is $1$, standing for $true$, and therefore $\qlp{\B}$ behaves as classical $LP$. The following example presents two simple programs over the domains $\U$ and $\W$. It is not meant as a realistic application, but just as an illustration.

\begin{example}\label{exmpProg}
\hfill
\begin{enumerate}
\item The $\qlp{\U}$ program $\UProg$ displayed below can be understood as a {\em knowledge base} given by the facts for the predicates \texttt{animal}, \texttt{plant}, \texttt{human} and \texttt{eats}, along with {\em knowledge inference rules} corresponding to the clauses with non-empty body. The clauses for the predicate \texttt{human} specify the  human beings as the ancestors of \texttt{adam} and \texttt{eve}, with the certainty of being an actual human decreasing as one moves back along the ancestors' chain. Therefore, the certainty of being a cruel human also decreases when moving from descendants to ancestors.

\begin{verbatim}
  cruel(X) <-0.90- human(X), eats(X,Y), animal(Y)
  cruel(X) <-0.40- human(X), eats(X,Y), plant(Y)

  animal(bird) <-1.0-       human(adam) <-1.0-
  animal(cat) <-1.0-        human(eve) <-1.0-
  plant(oak) <-1.0-         human(father(X)) <-0.90- human(X)
  plant(apple) <-1.0-       human(mother(X)) <-0.90- human(X)

  eats(adam, X) <-0.80-
  eats(eve,X) <-0.30- animal(X)
  eats(eve,X) <-0.60- plant(X)
  eats(father(X),Y) <-0.80- eats(X,Y)
  eats(mother(X),Y) <-0.70- eats(X,Y)
\end{verbatim}

\item The $\qlp{\W}$ program $\WProg$ is very similar to $\UProg$, except that the attenuation value \texttt{1} is attached to all the clauses. Therefore, each clause is intended to convey the information that the depth of a proof tree for the head is 1 plus the maximum depth  of proof trees for the atoms in the body. As we will see, qualification constraints over $\W$ can be used to impose upper bounds to the depths of proof trees when solving goals w.r.t. $\WProg$.
\end{enumerate}

\noindent Note that the two programs in this example are different qualified versions of the classical $LP$ program $\Prog$ obtained by dropping all the annotations. Due to the left recursion in the clauses for the predicates  \texttt{human} and \texttt{eats}, some goals for $\Prog$ have an infinite search space where $SLD$ resolution with a leftmost selection strategy would fail to compute some expected answers. For instance, the answer $\{\texttt{X} \mapsto \texttt{mother(eve)}, \texttt{Y} \mapsto \texttt{apple}\}$ would not be computed for the goal \texttt{eats(X,Y)}. However, when solving goals for the qualified programs  $\UProg$ and $\WProg$  using the resolution method presented in Section \ref{Solving}, qualification constraints can be used for imposing bounds to the search space, so that even the leftmost selection strategy leads to successful computations. \qed
\end{example}

As shown in the example, clauses contain classic atoms in both their head and their body. But for our semantics, we will be interested in not only proving that we can infer an atom for a given program, but proving that we can infer it with qualification greater or equal than some given  value. For this reason, we introduce {\em \qdom-annotated atoms} $\at{A}{d}$, consisting of  an atom $A$ with an attached `annotation' $d \in \aqdomm$. For use in goals to be solved, we consider also {\em open annotated atoms}  of the form $\at{A}{W}$, where $W$ is a {\em qualification variable} intended to take values over $\aqdomm$. We postulate a countably infinite set $\War$ of qualification variables, disjoint from $\Var$ and $\Sigma$.

The {\em annotated Herbrand base} over $\qdomm$ is defined as the set $At_{\Sigma}(\qdomm)$ of all $\qdomm$-annotated atoms. The {\em \qdom-entailment relation} over  $At_{\Sigma}(\qdomm)$ is defined as follows: $\at{A}{d} \Dentail \at{A'}{d'}$ iff  there is some substitution $\theta$ such that $A' = A\theta$ and $d' \sqsubseteq d$. Finally, we define an {\em open Herbrand interpretation} over $\qdomm$ as any subset $\I \subseteq At_{\Sigma}(\qdomm)$ which is closed under \qdom-entailment. That is, an open Herbrand interpretation $\I$ including a given annotated atom $\at{A}{d}$ is required to include all the `instances' $\at{A'}{d'}$ such that $\at{A}{d} \Dentail \at{A'}{d'}$, because we intend to formalize a semantics such that all such instances are valid whenever $\at{A}{d}$ is valid.

In the sequel we refer to open Herbrand interpretations just as Herbrand interpretations, and we write $\intd$ for the family of all Herbrand interpretations over $\qdomm$. The following proposition is easy to prove from the definition of a Herbrand interpretation and the definitions of the union and intersection of a family of sets.

\begin{proposition}\label{propQDomLattice}
The family $\intd$ of all Herbrand interpretations over $\qdomm$ is a complete lattice under the inclusion ordering $\subseteq$, whose extreme points are $\intd$ as maximum and $\emptyset$ as minimum. Moreover, given any family of interpretations $I \subseteq \intd$, its $lub$ and $glb$ are $\supr I = \union \{\I \in \intd \mid \I \in I\}$ and $\infi I = \inter \{\I \in \intd \mid \I \in I\}$, respectively. \qed
\end{proposition}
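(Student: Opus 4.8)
The plan is to verify the two claims in Proposition~\ref{propQDomLattice} directly from the definitions, since the real content is bookkeeping about closure under \qdom-entailment. First I would establish that $\intd$ is closed under arbitrary unions and arbitrary intersections: if every $\I$ in a family $I \subseteq \intd$ is closed under \Dentail, then so is $\union I$ (an annotated atom in the union lies in some $\I$, hence all its \qdom-entailed instances lie in that same $\I$, hence in the union) and so is $\inter I$ (an annotated atom in the intersection lies in every $\I$, hence all its instances lie in every $\I$, hence in the intersection). The empty-intersection convention gives the full set $At_{\Sigma}(\qdomm)$, which is trivially closed under \Dentail and is therefore the top element, while $\emptyset$ is vacuously closed and is the bottom element.

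Next I would note that a subset of the powerset of $At_{\Sigma}(\qdomm)$ that contains $\emptyset$ and $At_{\Sigma}(\qdomm)$ and is closed under arbitrary unions and arbitrary intersections is automatically a complete lattice under $\subseteq$, with $\supr I = \union\{\I \mid \I \in I\}$ and $\infi I = \inter\{\I \mid \I \in I\}$, because these set-theoretic operations already compute the least upper bound and greatest lower bound in the powerset and, being back inside $\intd$ by the closure just proved, they also serve as lub and glb within $\intd$. (One can, if desired, define $\infi I$ instead as the union of all interpretations below every member of $I$, but since the plain intersection already lands in $\intd$, the two descriptions coincide; the same remark applies dually.) The extreme points are then exactly $At_{\Sigma}(\qdomm)$ (join of the empty family, equivalently the whole lattice) and $\emptyset$ (meet of the empty family).

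There is essentially no obstacle here: the only point requiring a moment's care is the direction of the lub/glb formulas — checking that the set-theoretic union really is the least upper bound \emph{inside} $\intd$ and not merely an upper bound, which follows because any interpretation containing all members of $I$ must contain their union, and the union is itself a legitimate interpretation. Everything else is immediate from the definitions of Herbrand interpretation, \Dentail, and arbitrary union/intersection, which is why the statement is flagged as easy to prove.
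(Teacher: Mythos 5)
Your proof is correct and is exactly the argument the paper has in mind (the paper omits it, noting only that the result "is easy to prove from the definition of a Herbrand interpretation and the definitions of the union and intersection of a family of sets"): closure of $\intd$ under arbitrary unions and intersections, plus the observation that set-theoretic union and intersection then realize the lub and glb inside $\intd$. You are also right to read the top element as the full annotated Herbrand base $At_{\Sigma}(\qdomm)$, which is what the statement intends.
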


Let $C$ be any clause $A \qgets{d} B_{1}, \ldots, B_{k}$ in the program $\Prog$, and $\I \in \intd$ any interpretation over $\qdomm$. We say that $\I$ is a {\em model} of $C$ if and only if for any substitution $\theta$ and any qualification values $d_{1}, \ldots, d_{k} \in D \setminus \{\bot\}$ such that $\at{B_{i}\theta}{d_{i}} \in \I$ for all $1 \le i \le k$, one has $\at{A\theta}{(d \circ \infi\{d_{1}, \ldots, d_{k}\})} \in \I$. And we say that $\I$ is a model of the $\qlp{\qdomm}$ program $\Prog$ (in symbols, $\I \models \Prog$) if and only if $\I$ is a model of each clause in $\Prog$.

\subsection{Declarative Semantics}

As in any logic language, we need some technique to infer formulas (in our case, \qdom-annotated atoms) from  a given $\qlp{\qdomm}$ program $\Prog$. Following traditional ideas, we consider two alternative ways of formalizing an inference step which goes from the body of a clause to its head: an operator $\Tp$ and a qualified variant of Horn Logic, noted as $\qhl{\qdomm}$ and called \emph{Qualified Horn Logic}. The operator $\Tp : \intd \to \intd$ is defined as:
\[\begin{array}{ll}
\Tp(\I) =_{\mathrm{def}} \{ \at{A'}{d'} \mid & (A \qgets{d} B_{1}, \ldots, B_{k})\in\Prog,\\
& \theta\mbox{ subst.},\, \at{B_{i}\theta}{d_{i}} \in \I \mbox{ for all } 1 \le i \le k,\, A' = A\theta,\\
& d' \in D \setminus \{\bot\},\, ºd' \sqsubseteq d \circ \infi \{d_{1}, \ldots, d_{k}\}\}
\end{array}\]
Intuitively, we can see that for a given interpretation $\I$, $\Tp(\I)$ is the set of those \qdom-annotated atoms obtained by considering $\qdomm$-annotated bodies of clause instances that are included in $\I$ and propagating an annotation to the head via the clause's qualification value.

The logic $\qhl{\qdomm}$ is defined as a deductive system consisting just of one  inference rule $\mbox{QMP}(\qdomm)$, called \emph{Qualitative Modus Ponens} over $\qdomm$. If there are some $(A \qgets{d} B_{1}, \ldots, B_{k}) \in \Prog$, some substitution $\theta$ such that $A' = A\theta$ and $B'_{i} = B_{i}\theta$ for all $1 \le i \le k$ and $d' \sqsubseteq d \circ \infi\{d_{1}, \ldots, d_{k}\}$, the following inference step is allowed:

\[
    \frac
    {\quad \at{B'_{1}}{d_{1}} \quad \cdots \quad \at{B'_{k}}{d_{k}} \quad}
    {\at{A'}{d'}} \quad \mbox{QMP}(\qdomm)
\]
We will use the notations $\Prog \qhld \at{A}{d}$ (resp. $\Prog \qhldn{n} \at{A}{d}$) to indicate that $\at{A}{d}$ can be inferred from the clauses in program $\Prog$ in finitely many steps (resp. $n$ steps). Note that  $\qhl{\qdomm}$ proofs can be naturally represented as upwards growing {\em proof trees}
with \qdom-annotated atoms at their nodes, each node corresponding to one inference step having the children nodes as premises.

The following proposition collects the main results concerning the declarative semantics of the $\qlp{\qdomm}$ scheme. We just sketch some  key proof ideas. The  full proofs are given in \cite{Rom07}. As in \cite{VE86}, they can be developed in analogy  to the classical papers  \cite{VEK76,AVE82}, except that our Herbrand interpretations are open, as first suggested by Clark in \cite{Cla79}. Our use of the $\qhl{\qdomm}$ calculus is obviously related to the classical $\Tp$ operator, although it has no direct counterpart in the historical papers we are aware of.

\begin{proposition}\label{propResuls}
The following assertions hold for any $\qlp{\qdomm}$ program $\Prog$:
\begin{enumerate}
    \item $\I \models \Prog \iff \Tp(\I) \subseteq \I$ .
    \item $\Tp$ is monotonic and continuous.
    \item The least fixpoint  $\mu(\Tp)$ is the least Herbrand model of $\Prog$, noted as $\Mp$.
    \item $\Mp = \union_{n\in\NAT} \Tp\uparrow^{n}(\emptyset) = \{\at{A}{d} \mid \Prog \qhld \at{A}{d}\}$.
\end{enumerate}
\end{proposition}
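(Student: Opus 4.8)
The plan is to prove the four assertions essentially in the order given, following the classical van~Emden--Kowalski pattern adapted to the qualified, open-interpretation setting.

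For item~1, I would unfold the definitions: $\I \models \Prog$ means $\I$ is a model of every clause $C = (A \qgets{d} B_1,\ldots,B_k)$, i.e.\ for every $\theta$ and every $d_1,\ldots,d_k \in D\setminus\{\bot\}$ with $\at{B_i\theta}{d_i} \in \I$ we have $\at{A\theta}{d \circ \infi\{d_1,\ldots,d_k\}} \in \I$. I would show this is equivalent to $\Tp(\I) \subseteq \I$ by observing that each generator of $\Tp(\I)$ is of the form $\at{A'}{d'}$ with $A' = A\theta$ and $d' \sqsubseteq d \circ \infi\{d_1,\ldots,d_k\}$; since $\I$ is closed under \qdom-entailment, containing $\at{A\theta}{d\circ\infi\{d_1,\ldots,d_k\}}$ is the same as containing all such $\at{A'}{d'}$. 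So the model condition holds for all clauses iff every generator of $\Tp(\I)$ lies in $\I$, which is exactly $\Tp(\I)\subseteq\I$.

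For item~2, monotonicity is immediate from the definition of $\Tp$: enlarging $\I$ can only add more witnesses $\at{B_i\theta}{d_i}$, hence more elements of $\Tp(\I)$. For continuity I would take a directed (or just increasing) family $\{\I_j\}$ and show $\Tp(\union_j \I_j) \subseteq \union_j \Tp(\I_j)$ (the reverse inclusion is monotonicity). Here the key point is that each generator of $\Tp(\union_j \I_j)$ uses only finitely many annotated body atoms $\at{B_i\theta}{d_i}$ ($1\le i\le k$), and by directedness these all lie in a single $\I_j$; hence the generated atom is already in $\Tp(\I_j)$. Item~3 then follows from the Knaster--Tarski / Kleene fixpoint theorem applied to the complete lattice $\intd$ (Proposition~\ref{propQDomLattice}): $\Tp$ being monotonic has a least fixpoint $\mu(\Tp)$, which by item~1 coincides with the least $\I$ with $\Tp(\I)\subseteq\I$, i.e.\ the least Herbrand model $\Mp$; continuity gives $\mu(\Tp) = \union_{n\in\NAT} \Tp\uparrow^n(\emptyset)$, establishing the first equality of item~4.

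The remaining work, and the main obstacle, is the second equality in item~4: $\union_{n\in\NAT} \Tp\uparrow^n(\emptyset) = \{\at{A}{d} \mid \Prog \qhld \at{A}{d}\}$. I would prove both inclusions by induction. For $\subseteq$: by induction on $n$, show every $\at{A'}{d'} \in \Tp\uparrow^n(\emptyset)$ is \qhl{\qdomm}-derivable --- the generator of $\Tp$ that produces $\at{A'}{d'}$ from body atoms in $\Tp\uparrow^{n-1}(\emptyset)$ is literally an instance of the $\mathrm{QMP}(\qdomm)$ rule, and the body atoms are derivable by the induction hypothesis, so stacking these gives a proof tree. For $\supseteq$: by induction on the height $n$ of a \qhl{\qdomm} proof tree for $\at{A}{d}$, show $\at{A}{d} \in \Tp\uparrow^m(\emptyset)$ for some $m$ --- the last inference step matches a clause $A \qgets{e} B_1,\ldots,B_k$ with $\at{B_i'}{d_i}$ proved by strictly shorter trees, hence (IH) each $\at{B_i'}{d_i} \in \Tp\uparrow^{m_i}(\emptyset)$; since the $\Tp\uparrow^n(\emptyset)$ form an increasing chain, all body atoms lie in $\Tp\uparrow^{\max_i m_i}(\emptyset)$, so one more $\Tp$ application yields $\at{A}{d}$, using $d \sqsubseteq e \circ \infi\{d_1,\ldots,d_k\}$ and the generalized distributivity identity $e \circ \infi S = \infi\{e \circ d_i \mid d_i \in S\}$ noted after the domain axioms. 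Care is needed with the $k=0$ (fact) base case, where $\infi\emptyset = \top$ and $e \circ \top = e$ by axiom~2(b), and with the bookkeeping that the $d_i$ must lie in $D\setminus\{\bot\}$, which is guaranteed since program attenuation values and annotations are always distinct from $\bot$ and $\bot$ is never produced when all inputs avoid it (axioms 2(c),(d) only produce $\bot$ from a $\bot$ input).
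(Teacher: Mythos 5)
Your proof follows exactly the route the paper sketches: item~1 by unfolding the definitions and using closure under \qdom-entailment, item~2 by monotonicity plus a finiteness/chain argument for continuity, item~3 from the Knaster--Tarski/Kleene theorem on the complete lattice of Proposition~\ref{propQDomLattice}, and item~4 by the two inductions relating $\Tp\uparrow^{n}(\emptyset)$ and $\qhl{\qdomm}$ derivations. It is correct and simply fills in the details the paper delegates to \cite{Rom07}, so no further comparison is needed.
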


\begin{proof}[Sketch]
Item (1) is easy to prove from the definition of $\Tp$. In item (2), monotonicity ($\I \subseteq \J \Longrightarrow \Tp(\I) \subseteq \Tp(\J)$) follows easily from the definition of $\Tp$ and continuity ($\Tp(\union_{n\in\NAT} \I_{n}) = \union_{n\in\NAT} \Tp(\I_{n})$ for any chain $\{\I_{n} \mid n\in\NAT\} \subseteq \intd$ with $\I_{n} \subseteq \I_{n+1}$ for all $n\in\NAT$) follows from monotonicity and properties of chains and sets of interpretations. Item (3) follows from (1), (2), Proposition \ref{propQDomLattice} and some known properties about lattices. Finally, item (4) follows from proving the two implications $\Prog \qhldn{n} \at{A}{d} \Longrightarrow \exists m\, (\at{A}{d} \in \Tp\uparrow^{m}(\emptyset))$ and
 $\at{A}{d} \in \Tp\uparrow^{n}(\emptyset) \Longrightarrow \exists m\, (\Prog \qhldn{m} \at{A}{d})$ by induction on $n$. \qed
\end{proof}

The next example presents  proofs deriving annotated atoms that belong to the least models of the programs $\UProg$ and $\WProg$ from Example \ref{exmpProg}.

\begin{example}\label{exmpMp}
\hfill
\begin{enumerate}
\item The proof tree displayed below shows that the $\U$-annotated atom  at its root can be deduced from $\UProg$ in $\qhl{\U}$. Therefore, the atom belongs to $\M{\UProg}$.
    {\scriptsize{\[\frac
        {\displaystyle\frac
            {\displaystyle\frac{}{\,\texttt{human(eve)\#1.0}\,}}
            {\, \texttt{human(mother(eve))\#0.90} \,} \quad
        \displaystyle\frac
            {\displaystyle\frac
                {\displaystyle\frac{}{\,\texttt{animal(bird)\#1.0}\,}}
                {\,\texttt{eats(eve,bird)\#0.30}\,}}
            {\, \texttt{eats(mother(eve),bird)\#0.21} \,} \quad
        \displaystyle\frac
            {}
            {\, \texttt{animal(bird)\#1.0} \,}}
        {\quad \texttt{cruel(mother(eve))\#0.15} \quad}\]}}

It is easy to see which clause was used in each inference step. Note that the atom at the root could have been proved even with the greater certainty value  \texttt{0.189}. However, since \texttt{0.15} $\le$ \texttt{0.189},  the displayed inference it is also correct (albeit less informative).

\item A proof tree quite similar to the previous one, but  with different annotations, can be easily built to show that \texttt{cruel(mother(eve))\#4} can be deduced from $\WProg$ in $\qhl{\W}$. Therefore, this annotated atom belongs to $\M{\WProg}$. It conveys the information that \texttt{cruel(mother(eve))} has a proof tree of depth  \texttt{4} w.r.t. to the classical $LP$ program $\Prog$ obtained by dropping $\WProg$'s annotations. \qed
\end{enumerate}
\end{example} 
\section{Goal Solving by SLD(\qdom) Resolution} \label{Solving}

\subsection{Goals and Solutions}

In classical logic programming a goal is presented as a conjunction of atoms. In our setting, proving atoms with arbitrary qualifications may be unsatisfactory, since qualification values too close to $\bot$ may not ensure sufficient information. For this reason, we present goals as conjunctions of open \qdom-annotated atoms and we indicate the minimum qualification value required  each of them. Hence initial goals look like: $\at{A_{1}}{W_{1}}, \ldots, \at{A_{n}}{W_{n}} \sep W_{1} \sqsupseteq \beta_{1}, \ldots, W_{n} \sqsupseteq \beta_{n}$, where $W_i \in \War$ and $\beta_i \in D \setminus \{\bot\}$. Observe that we have annotated all atoms in the goal with qualification variables $W_i$  instead of plain values because we are interested in any solution that satisfies the {\em qualification constraints}  $W_{i} \sqsupseteq \beta_{i}$, used to impose lower bounds to the atoms' qualifications.

As explained in the next Subsection, goal resolution proceeds from an initial goal through intermediate goals until reaching a final solved goal. The intermediate goals have a more general form, consisting of a composition of three items: a conjunction of \qdom-annotated atoms $\ats{A}$ waiting to be solved, a substitution $\sigma$ computed in previous steps, and a set of qualification constraints $\Delta$. We consider two kinds of qualification constraints:

\begin{enumerate}
    \item $\alpha \circ W \sqsupseteq \beta$,
     where $W \in \War$ is qualification variable and $\alpha, \beta \in D \setminus \{\bot\}$ are such that $\alpha \sqsupseteq \beta$.
    This  is called a \emph{threshold constraint} for $W$.
    \item $W = d \circ \infi\{W_{1}, \ldots, W_{k}\}$,
    where $W, W_{1}, \ldots, W_{k} \in \War$ are qualification variables and $d \in D \setminus \{\bot\}$.
    This  is called a \emph{defining constraint} for $W$.
\end{enumerate}

In order to understand why these two kinds of constraints are needed, think of an annotated atom  $\at{A}{W}$ within an initial goal which includes also an initial threshold constraint $\top \circ W \sqsupseteq \beta$ (i.e. $W \sqsupseteq \beta$) for $W$. Applying a resolution step with  a program clause whose head unifies with $A$ and whose attenuation value is $d \in D \setminus \{\bot\}$ will lead to a new goal including a defining constraint $W = d \circ \infi\{W_{1}, \ldots, W_{k}\}$ for $W$ and a threshold constraint $d\circ\top\circ W_{i} \sqsupseteq \beta$ for each $1 \le i \le k$,
where the new qualification variables $W_i$  correspond to the atoms in the clause's body. This explains the need to introduce defining constraints as well as more general threshold constraints $\alpha \circ W \sqsupseteq \beta$. Intuitively, the  values $\alpha$ and $\beta$ within such constraints
play the role of an {\em upper} and a {\em lower} bound, respectively. As we will see, our goal solving procedure takes advantage of these bounds for pruning useless parts of the computation search space.

Let us now present some notations needed  for a formal definition of goals. Given a conjunction of \qdom-annotated atoms $\ats{A}$ and a set of qualification constraints $\Delta$, we define the following sets of variables:
\begin{itemize}
    \item $\mbox{var}(\ats{A}) =_{\mathrm{def}} \union \{\mbox{var}(A) \mid \at{A}{W} \in \ats{A}\}$ .
    \item $\mbox{war}(\ats{A}) =_{\mathrm{def}} \union \{W \mid \at{A}{W} \in \ats{A}\}$ .
    \item $\mbox{war}(\Delta)$ as the set of qualification variables that appears in any qualification constraint in $\Delta$.
    \item $\mbox{dom}(\Delta)$ as the set of qualification variables that appear in the left hand side
    of any qualification constraint in $\Delta$.
\end{itemize}
We also say that $\Delta$ is \emph{satisfiable} iff there is some $\omega \in \sustd$ --the set of all the substitutions of values in $D \setminus \{\bot\}$ for variables in  $\War$-- such that $\omega \in Sol(\Delta)$, what means that $\omega$ satisfies every qualification constraint in $\Delta$, i.e. $\omega$ is a {\em solution} of $\Delta$. Moreover, we say that $\Delta$ is \emph{admissible} iff it satisfies the following three conditions:

\begin{enumerate}
    \item $\Delta$ is satisfiable,
    \item for every $W \in \mbox{war}(\Delta)$ there exists one and only one constraint for $W$ in $\Delta$ (this implies $\mathrm{dom}(\Delta) = \mathrm{war}(\Delta)$), and
    \item the relation $>_{\Delta}$ defined by $W >_{\Delta} W_{i}$ iff
    there is some defining constraint $W = \alpha \circ \infi\{W_{1}, \ldots, W_{i}, \ldots, W_{k}\}$ in $\Delta$,
    satisfies that $>^{*}_{\Delta}$ is irreflexive.
\end{enumerate}
Finally, we say that $\Delta$  is \emph{solved} iff $\Delta$ is admissible and only contains defining constraints.
Now we are in a position to define {\em goals} and their {\em solutions}:

\begin{definition}[Goals and its Variables] \label{defGoal}
Given a conjunction of \qdom-annotated atoms $\ats{A}$, a substitution $\sigma \in \sust$ --the set of all substitutions of terms for variables in $\Var$-- and a set of qualification constraints $\Delta$, we say that $G \equiv \ats{A} \sep \sigma \sep \Delta$ is a \emph{goal} iff
    \begin{itemize}
        \item[i.] $\sigma \in \sust$ is idempotent and such that $dom(\sigma) \cap var(\ats{A}) = \emptyset$.
        \item[ii.] $\Delta$ is admisible.
        \item[iii.] For every qualification variable in $\mathrm{war}(\ats{A})$ there is one and only one threshold constraint for $W$ in $\Delta$. And there are no more threshold constraints in $\Delta$.
    \end{itemize}
Furthermore, if $\sigma = \epsilon$ (the identity substitution) then $G$ is called \emph{initial}, and if $\ats{A}$ is empty and $\Delta$ is solved, then $G$ is called \emph{solved}. For any goal $G$, we define the set of variables of $G$ as $\mathrm{var}(G) =_{\mathrm{def}} \mathrm{var}(\ats{A}) \cup \mathrm{dom}(\sigma)$ and the set of qualification variables of $G$ as $\mathrm{war}(G) =_{\mathrm{def}} \mathrm{war}(\ats{A}) \cup \mathrm{dom}(\Delta)$.
\qed
\end{definition}

\begin{definition}[Goal Solutions]\label{defSolutions}
A pair of substitutions $(\theta, \rho)$ such that $\theta \in \sust$ and $\rho \in \sustd$ is called a \emph{solution} of a goal $G \equiv \ats{A} \sep \sigma \sep \Delta$ iff:
\begin{enumerate}
    \item $\theta = \sigma\theta$ .
    \item $\rho \in Sol(\Delta)$ .
    \item $\Prog \qhld \at{A\theta}{W\rho}$ for all $\at{A}{W} \in \ats{A}$ .
\end{enumerate}

In addition, a solution $(\sigma,\mu)$ for a goal $G$ is said to be more general than another solution $(\theta,\rho)$ for the same goal $G$ (one also says in this case  that $(\theta,\rho)$ is subsumed by $(\sigma,\mu)$) iff $\sigma \preccurlyeq \theta$ $[\varset{G}]$ and $\mu \sqsupseteq \rho$ $[\warset{G}]$, where $\sigma \preccurlyeq \theta$ $[\varset{G}]$ means that there is some substitution $\eta$ such that the composition $\sigma\eta$ behaves the same as $\theta$ over any variable in the set $\varset{G}$ and $\mu \sqsupseteq \rho$ $[\warset{G}]$ means that $\mu(W) \sqsupseteq \rho(W)$ holds for any $W \in \warset{G}$.
\qed
\end{definition}

Any solved goal $G' \equiv \sigma \sep \Delta$ has the  \emph{associated solution} $(\sigma, \mu)$, where  $\mu = \omega_{\Delta}$ is the qualification substitution given by $\Delta$, such that $\omega_{\Delta}(W)$ is the qualification value determined by the defining constraints in $\Delta$ for all $W \in \mbox{dom}(\Delta)$, and $\omega_{\Delta}(W) = \bot$ for any $W \in \War \setminus \mbox{dom}(\Delta)$. Note that for any $W \in \mbox{dom}(\Delta)$ there exists one unique defining constraint  $W = d \circ \infi\{W_{1}, \ldots, W_{k}\}$ for $W$ in $\Delta$, and then  $\omega_{\Delta}(W)$ can be recursively computed as  $d \circ \infi\{\omega_{\Delta}(W_{1}), \ldots, \omega_{\Delta}(W_{k})\}$. The solutions associated to solved goals are called {\em computed answers}.

\begin{example}\label{exmpGoalSol}
\hfill
\begin{enumerate}
    \item A possible goal for program $\UProg$ in Example \ref{exmpProg} is \texttt{eats(father(X),Y)\#W1, human(father(X))\#W2 | W1>=0.4, W2>=0.6}; and a valid solution for it is \texttt{\{X $\mapsto$ adam, Y $\mapsto$ apple\} | \{W1 $\mapsto$ 0.50, W2 $\mapsto$ 0.75\}}.
    \item A goal for program $\WProg$ in Example \ref{exmpProg} may be \texttt{eats(X,Y)\#W | W<=5.0}; and a valid solution is \texttt{\{X $\mapsto$ father(adam), Y $\mapsto$ apple\} | \{W $\mapsto$ 4.0\}}.
    \qed
\end{enumerate}
\end{example}

Note that the goal for $\UProg$ in the previous example imposes lower bounds to the certainties to be computed, while the goal for $\WProg$ imposes an upper bound to the proof depth. In general, goal solving in $\qlp{\W}$ corresponds to depth-bound goal-golving in classical Logic Programming.

\subsection{SLD(\qdom) Resolution}

We propose a sound and strongly complete goal solving procedure called \emph{Qualified $SLD$ Resolution} parameterized over a given qualification domain $\qdomm$, written as $\sld{\qdomm}$, which makes use of \emph{annotated atoms} and \emph{qualification constraints} over $\qdomm$. The implementation of this goal solving procedure using $CLP$ technology will be discussed in the next section. Resolution computations are written $G_{0} \qres{C_{1}, \sigma_{1}} G_{1} \qres{C_{2}, \sigma_{2}} \cdots \qres{C_{n}, \sigma_{n}} G_{n}$, abbreviated as $G_{0} \qresn{*}{\sigma} G_{n}$ with $\sigma = \sigma_{1}\sigma_{2}\cdots\sigma_{n}$. They are finite sequences of resolution steps $G_{i-1} \qres{C_{i}, \sigma_{i}} G_{i}$, starting with an initial goal $G_0$ and ending up with a solved goal $G_n$. One single resolution step is formally defined as follows:

\begin{definition}[Resolution step] \label{defStep}
A \emph{resolution step} has the form \\
$\ats{L},$ $\at{A}{W}, \ats{R} \sep \sigma \sep \alpha \circ W \sqsupseteq \beta, \Delta \qres{C_{1}, \sigma_{1}} (\ats{L},\at{B_{1}}{W_{1}}, \ldots \at{B_{k}}{W_{k}}, \ats{R})\sigma_{1} \sep \sigma\sigma_{1} \sep \Delta_{1}$ \\
where $\at{A}{W}$ is  the {\em selected atom}, $\Delta_{1} = d\circ\alpha\circ W_{1} \sqsupseteq \beta, \ldots, d\circ\alpha\circ W_{k} \sqsupseteq \beta, W = d\circ \infi\{W_{1}, \ldots, W_{k}\}, \Delta$, $C_{1} \equiv (H \qgets{d} B_{1}, \ldots, B_{k}) \in_{\mathrm{var}} \Prog$  is chosen as a variant of a clause in $\Prog$ with fresh variables and such that $d \circ \alpha \sqsupseteq \beta$, $\sigma_{1}$ is the m.g.u. between $A$ and $H$, and $W_{1}, \ldots, W_{k} \in \War$ are fresh qualification variables. \qed
\end{definition}

The notation $\alpha \circ W \sqsupseteq \beta, \Delta$ represents a set of qualification constraints including the threshold constraint $\alpha \circ W \sqsupseteq \beta$ plus those in $\Delta$, with no particular ordering assumed. Notice that the condition $d \circ \alpha \sqsupseteq \beta$ is required for the resolution step to be enabled. In this way, threshold constraints $\alpha \circ W \sqsupseteq \beta$ are actively used for pruning parts of the computation search space where no solutions can be found. In the instance of $\qlp{\B}$ it is easily checked that all the qualification values and constraints become trivial, so that $\sld{\B}$ boils down to classical $SLD$ resolution. In the rest of this section we present the main properties of $\sld{\qdomm}$ resolution in the general case.

\begin{proposition}\label{propResolution}
If $G$ is a goal and $G_{0} \qres{C_{1},\sigma_{1}} G_{1}$, then $G_{1}$ is also a goal.
\end{proposition}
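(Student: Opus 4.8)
Write $G_{0} \equiv \ats{L},\, \at{A}{W},\, \ats{R} \sep \sigma \sep \alpha\circ W\sqsupseteq\beta,\, \Delta$ for the goal being reduced (the ``$G$'' of the statement) and $G_{1} \equiv (\ats{L},\, \at{B_{1}}{W_{1}},\, \ldots,\, \at{B_{k}}{W_{k}},\, \ats{R})\sigma_{1} \sep \sigma\sigma_{1} \sep \Delta_{1}$ for its reduct, with $C_{1} \equiv (H \qgets{d} B_{1}, \ldots, B_{k})$, the m.g.u.\ $\sigma_{1}$ of $A$ and $H$, the fresh qualification variables $W_{1}, \ldots, W_{k}$ and the new store $\Delta_{1}$ exactly as in Definition~\ref{defStep}; recall that $\Delta_{1}$ is $\Delta$ with the threshold constraint for $W$ deleted, the defining constraint $W = d\circ\infi\{W_{1},\ldots,W_{k}\}$ added, and one new threshold constraint $d\circ\alpha\circ W_{i}\sqsupseteq\beta$ added for each $1\le i\le k$. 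The plan is to verify the three clauses of Definition~\ref{defGoal} for $G_{1}$. Throughout I will use the standing invariant that in any goal distinct annotated atoms carry distinct qualification variables --- this holds for initial goals and is preserved by resolution because the atoms exposed in a step get fresh variables --- so in particular $W \notin \mathrm{war}(\ats{L}) \cup \mathrm{war}(\ats{R})$, and, by clause~(iii) for $G_{0}$, $\alpha\circ W\sqsupseteq\beta$ is the unique threshold constraint of $\Delta_{0} \equiv \alpha\circ W\sqsupseteq\beta,\, \Delta$ mentioning $W$.

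For clause~(i) I would pick $\sigma_{1}$ idempotent with $\mathrm{dom}(\sigma_{1})$ and its range variables inside $\mathrm{var}(A)\cup\mathrm{var}(H)$. By clause~(i) for $G_{0}$, $\mathrm{dom}(\sigma)$ is disjoint from $\mathrm{var}(A)$, and by the freshness of $C_{1}$ it is disjoint from $\mathrm{var}(H)$ and from every $\mathrm{var}(B_{i})$; from these disjointness facts the usual computation for composing a computed-answer substitution with a step m.g.u.\ in $SLD$ resolution gives that $\sigma\sigma_{1}$ is idempotent with $\mathrm{dom}(\sigma\sigma_{1}) = \mathrm{dom}(\sigma)\cup\mathrm{dom}(\sigma_{1})$ disjoint from the variables of $(\ats{L},\at{B_{1}}{W_{1}},\ldots,\at{B_{k}}{W_{k}},\ats{R})\sigma_{1}$. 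I expect this part to be purely mechanical.

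The substance is clause~(ii), admissibility of $\Delta_{1}$. Well-formedness of the new constraints is immediate: the step's enabling condition $d\circ\alpha\sqsupseteq\beta$ together with $\beta\neq\bot$ forces $d\circ\alpha\in D\setminus\{\bot\}$, so each $d\circ\alpha\circ W_{i}\sqsupseteq\beta$ is a legal threshold constraint. Uniqueness of constraints: in $\Delta_{0}$ the variable $W$ had exactly one constraint, the deleted threshold; the $W_{i}$ are new; every other variable keeps its unique constraint; hence every variable of $\mathrm{war}(\Delta_{1}) = \mathrm{war}(\Delta)\cup\{W\}\cup\{W_{1},\ldots,W_{k}\}$ has exactly one constraint in $\Delta_{1}$, so $\mathrm{dom}(\Delta_{1})=\mathrm{war}(\Delta_{1})$. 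Acyclicity: $>_{\Delta_{1}}$ is $>_{\Delta}$ with the extra edges $W>_{\Delta_{1}}W_{i}$ added; the $W_{i}$ are fresh, so they head no defining constraint and have no $>_{\Delta_{1}}$-successors, whence no cycle can pass through a $W_{i}$, and a cycle avoiding them would be a $>_{\Delta}$-cycle, contradicting admissibility of $\Delta_{0}$; so $>^{*}_{\Delta_{1}}$ is irreflexive. Satisfiability is the one clause needing a genuine construction. Fix $\omega_{0} \in Sol(\Delta_{0})$ (it exists since $\Delta_{0}$ is admissible). By the uniqueness just shown, the threshold-constrained variables of $\Delta_{1}$ are exactly $(\mathrm{war}(\ats{L})\cup\mathrm{war}(\ats{R}))\cup\{W_{1},\ldots,W_{k}\}$ (the ``input'' variables), while the ``defined'' variables of $\Delta_{1}$ are those of $\Delta_{0}$ together with $W$, each receiving, by acyclicity, the value obtained by topological evaluation of the defining constraints from the input variables. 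I would set $\omega_{1}(W_{i})=\top$, set $\omega_{1}(V)=\omega_{0}(V)$ for each old input variable $V\in\mathrm{war}(\ats{L})\cup\mathrm{war}(\ats{R})$, and define $\omega_{1}$ on the defined variables by that topological evaluation, so $\omega_{1}(W)=d\circ\infi\{\top,\ldots,\top\}=d$. Then every defining constraint holds by construction; every new threshold constraint holds because $d\circ\alpha\circ\top=d\circ\alpha\sqsupseteq\beta$; every old threshold constraint $\alpha'\circ V\sqsupseteq\beta'$ holds because $\omega_{1}(V)=\omega_{0}(V)$; and it is harmless that $\omega_{1}$ may differ from $\omega_{0}$ on the defined variables (e.g.\ on the ``parent'' of $W$), since those carry no threshold constraint. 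Hence $\omega_{1}\in Sol(\Delta_{1})$ and $\Delta_{1}$ is admissible.

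Finally, clause~(iii): substitutions in $\sust$ leave qualification variables untouched, so the qualification variables of the atom part of $G_{1}$ are $(\mathrm{war}(\ats{L})\cup\mathrm{war}(\ats{R}))\cup\{W_{1},\ldots,W_{k}\}$; and, as just identified in the satisfiability argument, the threshold constraints of $\Delta_{1}$ are exactly one per such variable and none other --- which is clause~(iii) for $G_{1}$. The main obstacle is precisely the satisfiability part of clause~(ii): it cannot be read off by inspection but needs the explicit solution above, and the construction has to lean on both the enabling condition $d\circ\alpha\sqsupseteq\beta$ and on $W$ quietly migrating from the ``input'' layer to the ``defined'' layer while the fresh $W_{i}$ take its place among the threshold-constrained variables. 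The bookkeeping in clauses~(ii) and~(iii) rests throughout on the distinctness invariant for qualification variables, which is the other point to handle carefully.
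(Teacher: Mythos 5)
Your proof is correct and takes the same route as the paper, whose own argument is only a two-sentence sketch (``not difficult to check'') that merely points out the constraint bookkeeping: the threshold constraint for $W$ disappears, a defining constraint for $W$ appears, and the fresh $W_{i}$ acquire threshold constraints. You supply the two details that sketch leaves implicit --- an explicit satisfiability witness for $\Delta_{1}$ (built from the enabling condition $d\circ\alpha\sqsupseteq\beta$ and topological evaluation of the defining constraints) and the invariant that distinct annotated atoms carry distinct qualification variables, which is indeed needed for clause~(iii) to survive the step --- so your account is, if anything, more complete than the paper's.
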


\begin{proof}[Sketch]
Assume a goal $G_{0}$ and a $\sld{\qdomm}$ resolution step $G_{0} \qres{C_{1},\sigma_{1}} G_{1}$, as in Definition \ref{defStep}. Then $G_0$ satisfies the conditions required for goals in Definition \ref{defGoal}, and we must show that $G_{1}$ also satisfies such conditions. This is not difficult to check, using the fact that $C_{1}$ has been chosen without variables in common with $G_{0}$. In particular, note that the threshold constraint for $W$ in $G_{0}$ is absent in $G_{1}$, which includes a defining constraint for $W$ and threshold constraints for the new qualification variables $W_{i}$. \qed
\end{proof}

The next two theorems are the main theoretical results in this report. The Soundness Theorem \ref{thmSoundness} guarantees  that every computed answer is correct in the sense that it is a solution of the given goal. The Strong Completeness Theorem \ref{thmCompleteness} ensures that, for any solution of a given goal and any fixed selection strategy, $\sld{\qdomm}$ resolution is able to compute  an equal, if not better, solution.
The proofs, given in Appendix \ref{Proofs}, use inductive techniques similar to those presented in \cite{Sta90} for classical $SLD$ resolution.
Example \ref{exmpRes} below illustrates the Completeness Theorem.

\begin{theorem}[Soundness]\label{thmSoundness}
Assume $G_{0} \qresn{*}{} G$ and $G = \sigma \sep \Delta$ solved. Let $(\sigma,\mu)$ be the solution associated to $G$. Then $(\sigma, \mu)$ --called the computed answer-- is a solution of $G_{0}$. \qed
\end{theorem}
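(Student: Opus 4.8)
The plan is to prove a more general statement by induction on the length of the resolution sequence $G_0 \qresn{*}{} G$, and then specialize it. Since intermediate goals need not be initial, I would phrase the invariant so that it propagates: for a resolution sequence $G_0 \qresn{*}{\sigma} G_n$ ending in a solved goal $G_n \equiv \sigma_n \sep \Delta_n$ with associated solution $(\sigma_n, \mu_n)$ where $\mu_n = \omega_{\Delta_n}$, the pair $(\sigma_n, \mu_n)$ (suitably restricted) is a solution of $G_0$ in the sense of Definition~\ref{defSolutions}. The base case, a zero-step computation, forces $G_0$ itself to be solved; then $\ats{A}$ is empty, condition~(3) of Definition~\ref{defSolutions} is vacuous, condition~(1) holds since $\sigma_0$ is idempotent, and condition~(2) holds because $\omega_{\Delta_0} \in Sol(\Delta_0)$ — this last fact needs a small separate verification that the recursively computed substitution $\omega_\Delta$ really satisfies every defining constraint of a solved $\Delta$, which follows directly from how $\omega_\Delta(W)$ is defined as $d \circ \infi\{\omega_\Delta(W_1),\dots,\omega_\Delta(W_k)\}$.

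For the inductive step I would work \emph{from the front}: look at the first resolution step $G_0 \qres{C_1,\sigma_1} G_1$ as in Definition~\ref{defStep}, where the selected atom $\at{A}{W}$ is replaced using clause $C_1 \equiv (H \qgets{d} B_1,\dots,B_k)$, $\sigma_1 = \mathrm{mgu}(A,H)$, and $\Delta$ acquires the defining constraint $W = d \circ \infi\{W_1,\dots,W_k\}$ together with threshold constraints $d \circ \alpha \circ W_i \sqsupseteq \beta$. By the induction hypothesis applied to $G_1 \qresn{*}{} G_n$, the computed answer $(\sigma_n,\mu_n)$ is a solution of $G_1$; I must lift this to a solution of $G_0$. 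Condition~(1), $\sigma_n = \sigma_0 \sigma_n$: since $\sigma_n$ factors through $\sigma_1$ and $\sigma_0$ is the identity on an initial goal — or, in the generalized statement, since $\sigma_0\sigma_1\cdots$ composes correctly and each $\sigma_i$ is idempotent with disjoint domains — this is a routine composition check. Condition~(2), $\mu_n \in Sol(\Delta_0)$: every constraint of $\Delta_0$ other than the threshold constraint $\alpha \circ W \sqsupseteq \beta$ survives into $\Delta_1$ and is satisfied by the IH, so the only thing to check is $\alpha \circ \mu_n(W) \sqsupseteq \beta$; but $\mu_n(W) = d \circ \infi\{\mu_n(W_1),\dots,\mu_n(W_k)\}$ by the defining constraint, and the IH gives $d \circ \alpha \circ \mu_n(W_i) \sqsupseteq \beta$ for each $i$, whence by axiom~2(e) (distributivity of $\circ$ over $\sqcap$), commutativity and associativity, $\alpha \circ \mu_n(W) = \infi\{\alpha \circ d \circ \mu_n(W_i) : i\} \sqsupseteq \beta$. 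Condition~(3): for atoms $\at{A'}{W'}$ inherited unchanged from $G_0$, note that after applying $\sigma_1$ (and later substitutions) they appear in $G_1$ up to the renaming $\ats{L}\sigma_1, \ats{R}\sigma_1$, so the IH yields $\Prog \qhld \at{A'\sigma_n}{W'\mu_n}$ directly, using $A'\sigma_0\sigma_n = A'\sigma_n$; for the selected atom $A$ itself, I apply one $\mathrm{QMP}(\qdomm)$ inference step with clause $C_1$ to the proofs $\Prog \qhld \at{B_i\sigma_n}{\mu_n(W_i)}$ supplied by the IH (these are exactly the body atoms introduced in $G_1$), which derives $\at{H\sigma_n}{d \circ \infi\{\mu_n(W_i)\}} = \at{A\sigma_n}{\mu_n(W)}$ since $H\sigma_1 = A\sigma_1$ and the annotation equals $\mu_n(W)$ by the defining constraint.

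The main obstacle is bookkeeping rather than conceptual depth: one has to be careful that the substitutions and qualification variables introduced step by step compose coherently, that the freshness of variables in each chosen clause variant is used to justify disjointness of domains (so that $A'\sigma_0\sigma_1\cdots\sigma_n = A'\sigma_n$ and similarly for the body atoms), and that Definition~\ref{defSolutions}'s condition~(1) $\theta = \sigma\theta$ is read correctly when $\sigma$ is the accumulated composition. The one genuinely load-bearing algebraic fact is the derived identity $d \circ \infi S = \infi\{d \circ e \mid e \in S\}$ (stated just after the qualification-domain axioms), which is what makes the annotation propagated by the defining constraint agree with the annotation produced by $\mathrm{QMP}(\qdomm)$, and what makes the threshold bound survive; everything else is induction and substitution algebra. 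I would organize the write-up as: (i) the generalized invariant; (ii) base case plus the lemma that $\omega_\Delta \in Sol(\Delta)$ for solved $\Delta$; (iii) the inductive step split into the three conditions as above; (iv) specialization to $G_0$ initial, which is immediate since then $\sigma_0 = \epsilon$ and $\varset{G_0}, \warset{G_0}$ contain exactly the relevant variables.
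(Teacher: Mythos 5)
Your proposal is correct and follows essentially the same route as the paper: induction on the length of the derivation, with the inductive step showing that a solution of $G_1$ lifts back to a solution of $G_0$ by checking the three conditions of Definition~\ref{defSolutions} exactly as you describe (the paper merely factors this one-step lifting out as a separate Soundness Lemma, and likewise hinges conditions~(2) and~(3) on the identity $d \circ \bigsqcap S = \bigsqcap\{d \circ e \mid e \in S\}$ and a single $\mathrm{QMP}(\qdomm)$ inference with the chosen clause). The substitution bookkeeping you flag as the main obstacle is handled in the paper precisely via the idempotence and domain-disjointness arguments you anticipate.
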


\begin{theorem}[Strong Completeness]\label{thmCompleteness}
Assume a given solution $(\theta,\rho)$ for $G_{0}$ and any fixed strategy for choosing the selected atom at each resolution step. Then there is some computed answer $(\sigma, \mu)$ for $G_{0}$ which subsumes $(\theta,\rho)$. \qed
\end{theorem}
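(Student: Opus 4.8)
The plan is to prove the Strong Completeness Theorem by induction on the structure of a $\qhl{\qdomm}$ proof witnessing that $(\theta,\rho)$ is a solution of $G_0$, more precisely by induction on the total number of $\mbox{QMP}(\qdomm)$ inference steps summed over the proof trees for all annotated atoms $\at{A\theta}{W\rho}$ with $\at{A}{W}$ in the goal's atom conjunction. The key technical device will be a one-step lifting lemma of the following shape: if $G = \ats{L},\at{A}{W},\ats{R} \sep \sigma \sep \alpha\circ W \sqsupseteq \beta, \Delta$ is a goal, $(\theta,\rho)$ is a solution of $G$ with the selected atom $\at{A}{W}$ chosen according to the fixed strategy, and the proof tree for $\at{A\theta}{W\rho}$ is nonempty (i.e.\ $A\theta$ is derived by a $\mbox{QMP}(\qdomm)$ step from a program clause $H \qgets{d} B_1,\ldots,B_k$ via a substitution $\tau$ and body annotations $d_1,\ldots,d_k$ with $W\rho \sqsubseteq d\circ\infi\{d_1,\ldots,d_k\}$), then there is a resolution step $G \qres{C_1,\sigma_1} G_1$ using a fresh variant $C_1$ of that very clause, such that $(\theta',\rho')$ is a solution of $G_1$ which is at least as general as $(\theta,\rho)$ relative to $\varset{G}$ and $\warset{G}$. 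Here $\theta'$ extends $\theta$ to handle the fresh clause variables (so that $\sigma_1\theta' = \theta$ on $\varset{G}$, which is possible because $\sigma_1$ is the m.g.u.\ of $A$ and $H$ and $A\theta = A\theta$ unifies with $H\tau$ after renaming), and $\rho'$ extends $\rho$ by setting $\rho'(W_i) = d_i$ on the fresh qualification variables; the case where the proof tree for $\at{A\theta}{W\rho}$ is empty is impossible under our syntax since every atom must be derived by a clause, but if the selected atom already had been solved in a previous step this situation does not arise because selected atoms are always unsolved.

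The induction argument then runs as follows. If the atom conjunction of $G_0$ is empty, then $G_0$ is already solved and its associated computed answer $(\sigma,\mu)$ is, by the Soundness Theorem, a solution of $G_0$; one checks directly from the definitions of $\omega_\Delta$ and of ``more general'' that $(\sigma,\mu)$ subsumes every solution of $G_0$, since $\mu = \omega_\Delta$ assigns to each qualification variable the largest value compatible with the defining constraints, hence $\mu \sqsupseteq \rho$ on $\warset{G_0}$ for any solution $\rho$, and $\sigma \preccurlyeq \theta$ trivially since on an initial goal $\sigma = \epsilon$. Otherwise, pick the selected atom $\at{A}{W}$ dictated by the strategy, apply the lifting lemma to obtain $G_0 \qres{C_1,\sigma_1} G_1$ and a solution $(\theta',\rho')$ of $G_1$ that is at least as general as $(\theta,\rho)$; the total proof-step count strictly decreases because the proof tree for the selected atom's instance loses its root inference (its $k$ children subtrees are redistributed among the new goal's atoms, contributing strictly fewer total steps). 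By the induction hypothesis applied to $G_1$ and $(\theta',\rho')$ — here one must be slightly careful, since the fixed strategy must be understood as extending coherently to $G_1$, which is fine because the strategy is assumed fixed once and for all — there is a computed answer $(\sigma_1\cdots\sigma_n, \mu_1)$ for $G_1$ subsuming $(\theta',\rho')$, and hence $(\sigma_1(\sigma_2\cdots\sigma_n), \mu_1)$, composed with the leading $\sigma_1$ step, yields a computed answer for $G_0$; transitivity of the ``more general than'' relation (which one should state and check as a small auxiliary lemma, paying attention to the restriction-to-variable-set qualifiers $[\varset{G}]$ and $[\warset{G}]$) then gives that this computed answer subsumes $(\theta,\rho)$.

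The main obstacle I anticipate is the careful bookkeeping in the lifting lemma: one needs to verify that the extended substitution $\theta'$ can be chosen idempotent with $\mathrm{dom}(\theta') \cap \varset{\ats{A}}$ handled correctly, that $\theta' = \sigma\sigma_1\theta'$ genuinely holds (requiring that $\tau$, the clause-instantiating substitution from the $\mbox{QMP}(\qdomm)$ step, factors through the m.g.u.\ $\sigma_1$ after accounting for the fresh renaming of $C_1$), and — most delicately — that $\rho'$ satisfies the \emph{new} constraints in $\Delta_1$, namely the defining constraint $W = d\circ\infi\{W_1,\ldots,W_k\}$ and the derived threshold constraints $d\circ\alpha\circ W_i \sqsupseteq \beta$. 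The defining constraint is satisfied by construction with possibly a strict inequality ($W\rho \sqsubseteq d\circ\infi\{d_1,\ldots,d_k\}$ but $\rho'$ forces equality on the right side, so one actually needs to set $\rho'(W) := d\circ\infi\{d_1,\ldots,d_k\}$ rather than keeping $\rho(W)$, and then verify $\rho'(W) \sqsupseteq \rho(W)$, consistent with the ``more general'' direction). The threshold constraints $d\circ\alpha\circ W_i \sqsupseteq \beta$ require using axiom 2(e) (distributivity of $\circ$ over $\sqcap$) together with the original threshold $\alpha\circ W \sqsupseteq \beta$, the step-enabling condition $d\circ\alpha\sqsupseteq\beta$, and the fact that $W\rho = \rho(W) \sqsubseteq d\circ d_i$ implies, after applying the monotone operation $\alpha\circ(-)$ and using $\alpha\circ W \sqsupseteq\beta$, the desired bound $d\circ\alpha\circ d_i \sqsupseteq \beta$ — this is exactly the place where the axioms for qualification domains do the real work, and getting the chain of inequalities right is the crux. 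Once the lifting lemma is in place the induction is routine.
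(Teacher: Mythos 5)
Your proposal is correct and follows essentially the same route as the paper's proof: a one-step lifting lemma (with exactly the paper's choice of the extended qualification substitution, setting $\rho'(W_i)=d_i$ and $\rho'(W)=d\circ\infi\{d_1,\ldots,d_k\}$, and the same distributivity/monotonicity argument for the new threshold constraints $d\circ\alpha\circ W_i \sqsupseteq \beta$), combined with induction on the total number of $\mbox{QMP}(\qdomm)$ inference steps across the proof trees of the goal's atoms. The only cosmetic slip is in the base case: the solved goal reached there carries an accumulated substitution $\sigma_0$ rather than $\epsilon$, and the needed $\sigma_0 \preccurlyeq \theta_0\ [\varset{G_0}]$ follows from the solution condition $\sigma_0\theta_0=\theta_0$ (likewise $\mu=\omega_{\Delta_0}$ agrees with, rather than merely dominates, $\rho_0$ on $\warset{G_0}$ because the defining constraints determine the values uniquely).
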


\begin{example}\label{exmpRes}
\hfill
\begin{enumerate}
    \item The following $\sld{\U}$ computation solves the goal for program $\UProg$ presented in Example \ref{exmpGoalSol}:
    \begin{center}
        \begin{tabular}{ll}
            \texttt{eats(father(X),Y)\#W1,} & \\
            $\qquad$ \texttt{human(father(X))\#W2 |} & \\
            $\qquad$ \texttt{W1 >= 0.4, W2 >= 0.6} & $\qres{eats.4, \{X \mapsto \texttt{adam}\}}$ \\
            \texttt{eats(adam,Y)\#W3,} & \\
            $\qquad$ \texttt{human(father(adam))\#W2 |} $\{X \mapsto \texttt{adam}\}$ \texttt{|} & \\
            $\qquad$ \texttt{W1 = 0.8 * min\{W3\},} & \\
            $\qquad$ \texttt{W2 >= 0.6, 0.8 * W3 >= 0.4 } & $\qres{eats.1, \epsilon}$ \\
            \texttt{human(father(adam))\#W2 |} $\{X \mapsto \texttt{adam}\}$ \texttt{|} & \\
            $\qquad$ \texttt{W1 = 0.8 * min\{W3\},} & \\
            $\qquad$ \texttt{W2 >= 0.6, W3 = 0.8} & $\qres{human.3, \epsilon}$ \\
            \texttt{human(adam)\#W4 |} $\{X \mapsto \texttt{adam}\}$ \texttt{|} & \\
            $\qquad$ \texttt{W1 = 0.8 * min\{W3\},} & \\
            $\qquad$ \texttt{W2 = 0.9 * min\{W4\},} &  \\
            $\qquad$ \texttt{W3 = 0.8, 0.90 * W4 >= 0.6 } & $\qres{human.1, \epsilon}$ \\
            \texttt{|} $\{X \mapsto \texttt{adam}\}$ \texttt{|} & \\
            $\qquad$ \texttt{W1 = 0.8 * min\{W3\},} & \\
            $\qquad$ \texttt{W2 = 0.9 * min\{W4\},} &  \\
            $\qquad$ \texttt{W3 = 0.8, W4 = 1.0} &
        \end{tabular}
    \end{center}

    Note that the computed answer \texttt{\{X $\mapsto$ adam\} | \{W1 $\mapsto$ 0.64, W2 $\mapsto$ 0.90\}} subsumes the solution for the same goal given in Example \ref{exmpGoalSol}.

    \item Similarly, $\sld{\W}$ resolution can solve the goal \texttt{eats(X,Y)\#W | W <= 5.0} for $\WProg$, obtaining a computed answer  \texttt{\{X $\mapsto$ father(adam)\} | \{W $\mapsto$ 3.0\}} which subsumes the solution for the same goal given in Example \ref{exmpGoalSol}. \qed
\end{enumerate}
\end{example}

\section{Towards an Implementation} \label{Implementation}

In this section we assume a qualification domain $\qdomm$ and a constraint domain $\cdom{\qdomm}$ such that the qualification constraints used in $\sld{\qdomm}$ resolution can be expressed as  $\cdom{\qdomm}$ constraints, and we describe a translation of $\qlp{\qdomm}$ programs $\Prog$ and goals $G$ into $\clp{\cdom{\qdomm}}$ programs $\Prog^{t}$ and goals $G^{t}$, such that solving $G$ with  $\sld{\qdomm}$ resolution using $\Prog$ corresponds to solving  $G^{t}$ with constrained $SLD$ resolution using $\Prog^{t}$ and a solver for $\cdom{\qdomm}$.

The translation can be used to develop an implementation of $\sld{\qdomm}$ resolution for the $\qlp{\qdomm}$ language on top of any $CLP$ or $CFLP$ system that supports $\cdom{\qdomm}$ constraints. In particular, if $\qdomm$ is any of the two qualification domains $\U$ or $\W$, the constraint domain $\cdom{\qdomm}$ can be chosen as $\rdom$, which supports arithmetic constraints over the real numbers \cite{JMSY92}. We have developed prototype implementations for $\qlp{\U}$, $\qlp{\W}$ and $\qlp{\U \times \W}$ on top of the $CFLP$ system $\toy$ \cite{toy}, that supports $\rdom$ constraints. Note that although the use of a $CLP(\rdom)$ system could lead to a more efficient implementation, we have chosen a $CFLP(\rdom)$ system instead of a $CLP(\rdom)$ one due to our interest in a future extension of
the $\qlp{\qdomm}$ scheme  to support qualified  $CFLP$ programming. 

Our translation of a  $\qlp{\qdomm}$ program works by adding three extra arguments to all predicates and translating each clause independently. Given the $\qlp{\qdomm}$ clause \[C \equiv p(\ats{t}) \qgets{d} q_{1}(\ats{s_{1}}), \ldots, q_{k}(\ats{s_{k}})\] its head is translated as $p(\ats{t}, Alpha, W, Beta)$, where the new variables $Alpha$, $W$ and $Beta$ correspond, respectively, to $\alpha$, $W$ and $\beta$ in the  threshold constraint $\alpha \circ W \sqsupseteq \beta$ related to a \qdom-annotated atom $\at{A}{W}$ which could be selected for a $\sld{\qdomm}$ resolution step using the clause $C$. The clause's body is translated with the aim of  emulating such a resolution step, and the translated clause becomes:

\begin{center}
\begin{tabular}{lcl}
$C^{t} \equiv p(\ats{t}, Alpha, W, Beta)$ & $\leftarrow$ & $d\circ Alpha \sqsupseteq Beta,$ \\
    & & $W_{1} \sqsupset \bot, W_{1} \sqsubseteq \top, q_{1}(\ats{s_{1}},d\circ Alpha, W_{1}, Beta),$ \\
    & & $\vdots$ \\
    & & $W_{k} \sqsupset \bot, W_{k} \sqsubseteq \top, q_{k}(\ats{s_{k}},d\circ Alpha, W_{k}, Beta),$ \\
    & & $W = d\circ \bigsqcap\{W_{1}, \ldots, W_{k}\}$
\end{tabular}
\end{center}

\noindent The conditions in the body  of $C^{t}$ do indeed correspond to the performance of a $\sld{\qdomm}$ resolution step with clause $C$. In fact, $d\circ Alpha \sqsupseteq Beta$ checks that $C$ is eligible for such a step; the conditions in the next $k$ lines using new variables $W_i$ correspond to placing the annotated atoms from $C$'s body into the new goal; and the last condition introduces the proper defining constraint for $W$.

The idea for translating goals is similar. Given an initial goal  $\qlp{\qdomm}$ goal $G$ like \[q_{1}(\ats{t_{1}})\, \sharp\, W_{1}, \ldots, q_{m}(\ats{t_{m}})\, \sharp\,  W_{m} \sep W_{1} \sqsupseteq \beta_{1}, \ldots, W_{m} \sqsupseteq \beta_{m}\] where $\beta_{1}, \ldots, \beta_{m} \in \aqdomm$,  the translated goal $G^{t}$ is
 \[q_{1}(\ats{t_{1}},\top,W_1,\beta_1), \ldots, q_{m}(\ats{t_{m}},\top,W_m,\beta_m)\] where the three additional arguments at each atom are used to encode the initial threshold constraints $W_{i} \sqsupseteq \beta_{i}$, that are equivalent to $\top \circ W_{i} \sqsupseteq \beta_{i}$.

\begin{example}\label{exmpImplementation}
As an example of the translation process we present the translation of the program $\UProg$  from Example \ref{exmpProg} into a $\toy$ program which uses $\rdom$ constraints.
    \begin{verbatim}
  min1 [] = 1
  min1 [X|Xs] = min2 X (min1 Xs)
  min2 W1 W2 = if W1 <= W2 then W1 else W2
  data being = adam | eve | bird | cat | oak | apple
               | father being | mother being
  cruel(X,F,W,M) :- F*0.9>=M, W1>0, W1<=1.0, human(X,F*0.9,W1,M),
                    W2>0, W2<=1.0, eats(X,Y,F*0.9,W2,M),
                    W3>0, W3<=1.0, animal(Y,F*0.9,W3,M),
                    W == 0.9 * min1 [W1,W2,W3]
  cruel(X,F,W,M) :- F*0.4>=M, W1>0, W1<=1.0, human(X,F*0.4,W1,M),
                    W2>0, W2<=1.0, eats(X,Y,F*0.4,W2,M),
                    W3>0, W3<=1.0, plant(Y,F*0.4,W3,M),
                    W == 0.4 * min1 [W1,W2,W3]
  animal(bird,F,W,M) :- F*1.0>=M, W == 1.0 * min1 []
  animal(cat,F,W,M)  :- F*1.0>=M, W == 1.0 * min1 []
  plant(oak,F,W,M)   :- F*1.0>=M, W == 1.0 * min1 []
  plant(apple,F,W,M) :- F*1.0>=M, W == 1.0 * min1 []
  human(adam,F,W,M)      :- F*1.0>=M, W == 1.0 * min1 []
  human(eve,F,W,M)       :- F*1.0>=M, W == 1.0 * min1 []
  human(father(X),F,W,M) :- F*0.9>=M, W1>0, W1<=1.0,
        human(X,F*0.9, W1, M), W == 0.9 * min1 [W1]
  human(mother(X),F,W,M) :- F*0.8>=M, W1>0, W1<=1.0,
        human(X,F*0.8, W1, M), W == 0.8 * min1 [W1]
  eats(adam,X,F,W,M) :- F*0.8>=M, W == 0.8 * min1 []
  eats(eve,X,F,W,M) :- F*0.3>=M, W1>0, W1<=1.0,
        animal(X,F*0.3,W1,M), W == 0.3 * min1 [W1]
  eats(eve,X,F,W,M) :- F*0.6>=M, W1>0, W1<=1.0,
        plant(X,F*0.6,W1,M), W == 0.6 * min1 [W1]
  eats(father(X),Y,F,W,M) :- F*0.8>=M, W1>0, W1<=1.0,
        eats(X,Y,F*0.8,W1,M), W == 0.8 * min1 [W1]
  eats(mother(X),Y,F,W,M) :- F*0.7>=M, W1>0, W1<=1.0,
        eats(X,Y,F*0.7,W1,M), W == 0.7 * min1 [W1]
    \end{verbatim}

    To understand this example it is important to notice the following:
    \begin{enumerate}
        \item Since $glb$s in $\U$ are computed as minimums, translated programs must include functions for this task. Here, \texttt{min1} resp. \texttt{min2} compute the minimum of a list of numbers resp. two numbers.
        \item As $\toy$ need types for every constructor, we must include  suitable datatype declarations in translated programs.
        \item The resulting code could be simplified and optimized,  but our aim here is to illustrate the literal application of the general translation rules. For this reason, no optimizations have been performed. \qed
         \end{enumerate}
\end{example} 
\section{Conclusions and Future Work} \label{Conclusions}

We have generalized the early $QLP$ proposal by van Emden \cite{VE86} to a generic scheme $\qlp{\qdomm}$ parameterized by a qualification domain $\qdomm$, which must be a lattice with extreme points and equipped with an attenuation operator. The values belonging to a qualification domain are intended to qualify logical assertions, ensuring that they satisfy certain user's expectations. Qualification domains include $\B$ (classical truth values of two-valued logic), $\U$ (van Emden's certainty values) and $\W$ (numeric values representing proof weights), as well as arbitrary cartesian products of given qualification domains. As shown by  instances such as  $\qlp{\W}$ and $\qlp{\U \times \W}$, the $\qlp{\qdomm}$ scheme can express uncertainty in Logic Programming and more, since the user's expectations qualified by $\W$ do not correspond to uncertain truth values.

The semantic results obtained for $\qlp{\qdomm}$ are stronger than those in \cite{VE86}. Each program $\Prog$ has a least open Herbrand model $\Mp$ with two equivalent characterizations: as the least fixpoint of the operator  $\Tp$, and as the set of qualified atoms deducible from $\Prog$ in the logic calculus  $\qhl{\qdomm}$. Moreover, the goal solving calculus $\sld{\qdomm}$, based on an extension of $SLD$ resolution with qualification constraints, is sound and strongly complete for arbitrary open goals. $\sld{\B}$ boils down to classical $SLD$ resolution.

As implementation technique, we have proposed a translation of $\qlp{\qdomm}$ programs and goals into $\clp{\cdom{\qdomm}}$, choosing a constraint domain $\cdom{\qdomm}$ able to compute with qualification constraints over $\qdomm$. If $\qdomm$ is  $\U$, $\B$, or $\U \times \B$, the constraint domain $\cdom{\qdomm}$ can be chosen as $\rdom$, and $\qlp{\qdomm}$ can be implemented on top of any $CLP$ or $CFLP$ system which supports constraint solving over $\rdom$. We have implemented prototypes of  $\qlp{\U}$,  $\qlp{\W}$ and $\qlp{\U \times \W}$ on top of the $CFLP$ system $\toy$.

In comparison to the theory of generalized annotated logic programs ($GAP$ for short)  presented in  \cite{KS92}, our results in this report also include some interesting contributions. With respect to the syntax and goal solving procedure, the $\qlp{\qdomm}$ scheme can be made to fit into the  $GAP$ framework by viewing our attenuation operators as annotation functions. However, our resolution procedure $\sld{\qdomm}$ can be implemented more efficiently than the constrained $SLD$ resolution used in $GAP$, due to an optimized treatment of qualification constraints and, more importantly, because the costly computation of so-called {\em reductants} between variants of program clauses is needed in $GAP$ resolution but not in $\sld{\qdomm}$. The purpose of reductants in $GAP$ is to explicitly compute the $lub$s of several lattice values (qualification values in the case of $\qlp{\qdomm}$) which would result from finitely many different computations if no reductants were used. In $GAP$'s declarative semantics, interpretations are required to be closed w.r.t. finite $lub$s of lattice values assigned to the same atom, and for this reason  reductants are needed for the completeness of goal resolution. In $\qlp{\qdomm}$ interpretations as defined in Section \ref{Language} no closure condition w.r.t. $lub$s is required, and therefore the completeness result stated in Theorem \ref{thmCompleteness} can be proved without reductants. Of course, the $\qlp{\qdomm}$ approach to semantics means that a user has to observe \emph{several} computed answers for one and the same goal and think of the \emph{lub} of the various $\qdomm$ elements provided by the different computations by himself instead of getting the \emph{lub} computed by one single $\sld{\qdomm}$ derivation. In our opinion, this is a reasonable scenario because even in $GAP$ the $\mathcal{T}$ value provided by any single computed answer always corresponds to some $lub$ of finitely many  $\mathcal{T}$ values,  and it may be not the highest possible $\mathcal{T}$ value w.r.t. to the program's declarative semantics. Moreover, our Theorem  \ref{thmCompleteness} is much stronger than the one given in \cite{KS92}, which only ensures the possibility of computing {\em some} solution for any goal whose solvability holds in the least program model. We strongly conjecture that a stronger completeness theorem could be proved also for $GAP$ by using a proof technique more similar to our's.

As possible lines of future work we consider: to improve the current prototype implementations of the instances $\qlp{\U}$, $\qlp{\W}$ and $\qlp{\U \times \W}$; to extend the $\qlp{\qdomm}$ scheme and its implementation to a more expressive scheme which can support qualitative programming with features such as disjunctive goals, negation, lazy functions and parametrically given constraint domains; to explore alternative semantic approaches, considering annotations, bilattices, probabilistic semantics and similarity based unification; and to investigate applications to the computation of qualified answers for web search queries.

\section*{Acknowledgements}
The authors are thankful to their colleagues Paco L\'opez and Rafa Caballero for their valuable hints concerning bibliography and implementation techniques. They also appreciate the constructive comments of the anonymous reviewers, that were helpful for improving the presentation.

\bibliographystyle{abbrv}
\bibliography{biblio}

\begin{thebibliography}{10}

\bibitem{AVE82}
K.~R. Apt and M.~H. van Emden.
\newblock Contributions to the theory of logic programming.
\newblock {\em Journal of the Association for Computing Machinery (JACM)},
  29(3):841--862, 1982.

\bibitem{AF99}
F.~Arcelli and F.~Formato.
\newblock Likelog: a logic programming language for flexible data retrieval.
\newblock In {\em Proceedings of the 1999 ACM Symposium on Applied computing
  (SAC'99)}, pages 260--267, New York, NY, USA, 1999. ACM Press.

\bibitem{toy}
P.~Arenas, A.~J. Fern\'andez, A.~Gil, F.~J. L\'opez-Fraguas,
  M.~Rodr\'iguez-Artalejo, and F.~S\'aenz-P\'erez.
\newblock $\mathcal{TOY}$, a multiparadigm declarative language. version 2.3.1,
  2007.
\newblock R. Caballero and J. S\'anchez (Eds.), Available at {\tt
  http://toy.sourceforge.net}.

\bibitem{Cla79}
K.~L. Clark.
\newblock Predicate logic as a computational formalism (res. report doc 79/59).
\newblock Technical report, Imperial College, Dept. of Computing, London, 1979.

\bibitem{DS00}
A.~Dekhtyar and V.~S. Subrahmanian.
\newblock Hybrid probabilistic programs.
\newblock {\em Journal of Logic Programming}, 43(3):187--250, 2000.

\bibitem{DKS06}
J.~Dix, S.~Kraus, and V.~S. Subrahmanian.
\newblock Heterogeneous temporal probabilistic agents.
\newblock {\em ACM Transactions on Computational Logic}, 7(1):151--198, 2006.

\bibitem{DNS00}
J.~Dix, M.~{Nanni}, and V.~S. Subrahmanian.
\newblock Probabilistic agent programs.
\newblock {\em ACM Transactions on Computational Logic}, 1(2):208--246, 2000.

\bibitem{Fit91}
M.~Fitting.
\newblock Bilattices and the semantics of logic programming.
\newblock {\em Journal of Logic Programming}, 11:91--116, 1991.

\bibitem{GMV04}
S.~Guadarrama, S.~Mu{\~{n}}oz, and C.~Vaucheret.
\newblock Fuzzy prolog: A new approach using soft constraint propagation.
\newblock {\em Fuzzy Sets and Systems}, 144(1):127--150, 2004.

\bibitem{JMSY92}
J.~Jaffar, S.~Michaylov, P.~J. Stuckey, and R.~H.~C. Yap.
\newblock The {CLP}({$\mathit{R}$}) language and system.
\newblock {\em ACM Transactions on Programming Languages and Systems},
  14(3):339--395, 1992.

\bibitem{KS92}
M.~Kifer and V.~S. Subrahmanian.
\newblock Theory of generalized annotated logic programs and their
  applications.
\newblock {\em Journal of Logic Programming}, 12(3\&4):335--367, 1992.

\bibitem{LSS04}
V.~Loia, S.~Senatore, and M.~I. Sessa.
\newblock Similarity-based {SLD} resolution and its role for web knowledge
  discovery.
\newblock {\em Fuzzy Sets and Systems}, 144(1):151--171, 2004.

\bibitem{MP06a}
G.~Moreno and V.~Pascual.
\newblock Programming with fuzzy logic and mathematical functions.
\newblock In A.~P. I.~Bloch and A.~Tettamanzi, editors, {\em Proceedings of the
  6th International Workshop on Fuzzy Logic and Applications (WILF'05)}, volume
  3849 of {\em LNAI}, pages 89--98. Springer Verlag, 2006.

\bibitem{NS92}
R.~T. Ng and V.~S. Subrahmanian.
\newblock Probabilistic logic programming.
\newblock {\em Information and Computation}, 101(2):150--201, 1992.

\bibitem{NS93}
R.~T. Ng and V.~S. Subrahmanian.
\newblock A semantical framework for supporting subjective and conditional
  probability in deductive databases.
\newblock {\em Journal of Automated Reasoning}, 10(2):191--235, 1993.

\bibitem{Rie98phd}
S.~Riezler.
\newblock {\em Probabilistic Constraint Logic Programming}.
\newblock PhD thesis, Neuphilologischen Fakult\"{a}t del Universit\"{a}t
  T\"{u}bingen, 1998.

\bibitem{Rom07}
C.~A. Romero-D\'iaz.
\newblock Programaci\'on l\'ogica cuantitativa y su implementaci\'on en
  $\mathcal{TOY}$.
\newblock Proyecto de Fin de M\'aster, DSIC, Universidad Complutense de Madrid.

\bibitem{Ses02}
M.~I. Sessa.
\newblock Approximate reasoning by similarity-based {SLD} resolution.
\newblock {\em Theoretical Computer Science}, 275(1-2):389--426, 2002.

\bibitem{Sha83}
E.~Y. Shapiro.
\newblock Logic programs with uncertainties: A tool for implementing rule-based
  systems.
\newblock In A.~Bundy, editor, {\em Proceedings of the 8th International Joint
  Conference on Artificial Intelligence (IJCAI'83)}, pages 529--532, Karlsruhe,
  Germany, 1983.

\bibitem{Sta90}
R.~F. St\"{a}rk.
\newblock A direct proof for the completeness of {SLD}-resolution.
\newblock In E.~B\"{o}rger, H.~K. B\"{u}ning, and M.~M. Richter, editors, {\em
  Proceedings of the 3rd Workshop on Computer Science Logic (CSL'89)}, volume
  440 of {\em LNCS}, pages 382--383. Springer Verlag, 1990.

\bibitem{Sub87}
V.~S. Subrahmanian.
\newblock On the semantics of quantitative logic programs.
\newblock In {\em Proceedings of the 4th IEEE Symposium on Logic Programming},
  pages 173--182, San Francisco, 1987.

\bibitem{Sub88}
V.~S. Subrahmanian.
\newblock Query processing in quantitative logic programming.
\newblock In {\em Proceedings of the 9th International Conference on Automated
  Deduction}, volume 310 of {\em LNCS}, pages 81--100, London, UK, 1988.
  Springer-Verlag.

\bibitem{Sub07}
V.~S. Subrahmanian.
\newblock Uncertainty in logic programming: Some recollections.
\newblock {\em Association for Logic Programming Newsletter}, 20(2), 2007.

\bibitem{VE86}
M.~H. van Emden.
\newblock Quantitative deduction and its fixpoint theory.
\newblock {\em Journal of Logic Programming}, 3(1):37--53, 1986.

\bibitem{VEK76}
M.~H. van Emden and R.~A. Kowalski.
\newblock The semantics of predicate logic as a programming language.
\newblock {\em Journal of the Association for Computing Machinery (JACM)},
  23(4):733--742, 1976.

\end{thebibliography}

\newpage\appendix
\section{Appendix: Additional Proofs} \label{Proofs}
This appendix contains the proofs of the  Soundness Theorem \ref{thmSoundness} and 
the Completeness Theorem \ref{thmCompleteness}. In order to prove them, we will 
previously prove an auxiliary lemma for each of the two cases.
In the rest of the Appendix, we assume a given program $\Prog$ over a qualification domain $\qdomm$.

\subsection{Proof of the Soundness Theorem}

\begin{lemma}[Soundness]\label{lemmaSoundness}
Assume two goals $G_{0}$ and  $G_{1}$  and a pair of substitutions $(\theta, \rho)$ 
such that $G_{0} \qres{C_{1},\sigma_{1}} G_{1}$ and $(\theta,\rho) \in \qsol{G_{1}}$
(the set of all solutions of $G_{1}$). Then we have that $(\theta, \rho) \in \qsol{G_{0}}$.
\end{lemma}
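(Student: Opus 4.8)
The plan is to unfold Definition \ref{defStep} for the step $G_0\qres{C_1,\sigma_1}G_1$ and Definition \ref{defSolutions} for membership in $\qsol{G_1}$ and $\qsol{G_0}$, and then to verify the three defining conditions of $(\theta,\rho)\in\qsol{G_0}$ one by one from the data supplied by $(\theta,\rho)\in\qsol{G_1}$. Writing $G_0\equiv\ats{L},\at{A}{W},\ats{R}\sep\sigma\sep\alpha\circ W\sqsupseteq\beta,\Delta$ and $G_1$ as in Definition \ref{defStep} --- with fresh‑variant clause $C_1\equiv(H\qgets{d}B_1,\ldots,B_k)$, m.g.u.\ $\sigma_1$ of $A$ and $H$, side condition $d\circ\alpha\sqsupseteq\beta$, and $\Delta_1 = d\circ\alpha\circ W_1\sqsupseteq\beta,\ldots,d\circ\alpha\circ W_k\sqsupseteq\beta,\ W=d\circ\infi\{W_1,\ldots,W_k\},\ \Delta$ --- the hypothesis $(\theta,\rho)\in\qsol{G_1}$ gives (a) $\theta=\sigma\sigma_1\theta$; (b) $\rho\in Sol(\Delta_1)$ and $\rho\in\sustd$; and (c) $\Prog\qhld\at{A'\sigma_1\theta}{W'\rho}$ for every $\at{A'}{W'}$ in $\ats{L}$ and $\ats{R}$, together with $\Prog\qhld\at{B_i\sigma_1\theta}{W_i\rho}$ for $1\le i\le k$.

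For the substitution condition $\theta=\sigma\theta$, I would apply the idempotent substitution $\sigma$ (Definition \ref{defGoal}(i)) to both sides of (a), getting $\sigma\theta=\sigma\sigma_1\theta=\theta$. For the constraint condition $\rho\in Sol(\alpha\circ W\sqsupseteq\beta,\Delta)$: since $\Delta\subseteq\Delta_1$, (b) already covers $\Delta$, so only the threshold constraint on $W$ remains. Here I would use that (b) forces $W\rho=d\circ\infi\{W_1\rho,\ldots,W_k\rho\}$ and $d\circ\alpha\circ W_i\rho\sqsupseteq\beta$ for every $i$, and then, by associativity, commutativity, and the generalized distributivity $d\circ\bigsqcap S=\bigsqcap\{d\circ e\mid e\in S\}$ recorded right after the domain axioms, conclude $\alpha\circ W\rho=\infi\{d\circ\alpha\circ W_i\rho\mid 1\le i\le k\}\sqsupseteq\beta$; the empty‑body case $k=0$ is treated separately, where $W\rho=d$ and the side condition $d\circ\alpha\sqsupseteq\beta$ gives the claim directly.

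The main part is the $\qhl{\qdomm}$‑derivability condition, $\Prog\qhld\at{A'\theta}{W'\rho}$ for every $\at{A'}{W'}$ in $\ats{L},\at{A}{W},\ats{R}$. Because $\sigma_1$ acts only on term variables and $\mathrm{dom}(\sigma)$ is disjoint from the variables of $\ats{L},\at{A}{W},\ats{R}$ (Definition \ref{defGoal}(i)), one gets $A'\sigma_1\theta=A'\theta$ for each atom $A'$ occurring there; hence (c) immediately yields the required derivations for the atoms of $\ats{L}$ and $\ats{R}$. For the selected atom I would build a single $\mbox{QMP}(\qdomm)$ step using $C_1$ (equivalently, the program clause of which $C_1$ is a fresh variant): by (c) the premises $\at{B_i\sigma_1\theta}{W_i\rho}$ are already derived, the annotations $W_i\rho$ and $W\rho$ lie in $D\setminus\{\bot\}$ by (b), and (b) also gives $W\rho=d\circ\infi\{W_1\rho,\ldots,W_k\rho\}$, so the inference is legal with head annotation $W\rho$ and concludes $\Prog\qhld\at{H\sigma_1\theta}{W\rho}$; since $H\sigma_1=A\sigma_1$ and $A\sigma_1\theta=A\theta$, this is exactly $\Prog\qhld\at{A\theta}{W\rho}$. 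Collecting conditions (1)--(3) of Definition \ref{defSolutions} then gives $(\theta,\rho)\in\qsol{G_0}$.

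I expect the main obstacle to be the purely bookkeeping part --- tracking which variables $\sigma$, $\sigma_1$, $\theta$, $\rho$ touch so as to justify $A'\sigma_1\theta=A'\theta$ and $\theta=\sigma\theta$, isolating the $k=0$ case in the constraint argument, and confirming that the head annotation $W\rho$ of the $\mbox{QMP}(\qdomm)$ step is a legitimate non‑$\bot$ annotation; the lattice and distributivity manipulations themselves are routine consequences of the qualification‑domain axioms.
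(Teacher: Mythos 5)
Your proposal is correct and follows essentially the same route as the paper's own proof: unfold the resolution step and the three conditions of Definition \ref{defSolutions}, inherit $\Delta$ and the side atoms directly (using that $\mathrm{dom}(\sigma)$ avoids the goal atoms so $A'\sigma_1\theta=A'\theta$), recover the threshold constraint via $\alpha\circ W\rho=\infi\{d\circ\alpha\circ W_i\rho\}$, and close the selected atom with one $\mbox{QMP}(\qdomm)$ step on $C_1$ instantiated by $\sigma_1\theta$. Your idempotence argument for $\sigma\theta=\theta$ and your explicit treatment of the $k=0$ case are slightly tidier than the paper's, but these are cosmetic refinements of the same proof.
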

\begin{proof}
Assume $G_{0} \equiv \ats{L},\at{A}{W},\ats{R}\sep\sigma_{0}\sep \alpha\circ W \sqsupseteq \beta, \Delta$; $C_{1} \equiv (H \qgets{d} B_{1}, \ldots, B_{k}) \in_{\mathrm{var}} \Prog$ a variant of a program clause without variables in common with $G_{0}$; and $\sigma_{1}$ the m.g.u. between $A$ and $H$. Then \[G_{1} \equiv (\ats{L}, \at{B_{1}}{W_{1}}, \ldots, \at{B_{k}}{W_{k}}\ats{R})\sigma_{1} \sep \sigma_{0}\sigma_{1} \sep \Delta_{1}\] where $\Delta_{1} \equiv W = d \circ \infi\{W_{1}, \ldots, W_{k}\}, d\circ\alpha\circ W_{1} \sqsupseteq \beta, \ldots, d\circ\alpha\circ W_{k} \sqsupseteq \beta, \Delta$. As $(\theta,\rho) \in \qsol{G_{1}}$ we know
\begin{itemize}
    \item[(1)] $\sigma_{0}\sigma_{1}\theta = \theta$ ,
    \item[(2)] $\rho \in Sol(\Delta_{1})$ , and
    \item[(3)] $\Prog \qhld (\ats{L},\at{B_{1}}{W_{1}}, \ldots, \at{B_{k}}{W_{k}}, \ats{R})\sigma_{1}\app(\theta,\rho)$\footnote{$\Prog \qhld (\at{A}{W})\app(\theta,\rho) \diff \Prog \qhld \at{A\theta}{W\rho} \enspace .$} .
\end{itemize}
And for $(\theta,\rho)$ to be a solution of $G_{0}$ we need the following:
\begin{itemize}
    \item[(4)] $\sigma_{0}\theta = \theta$ ,
    \item[(5)] $\rho \in Sol(\alpha\circ W \sqsupseteq \beta, \Delta)$ , and
    \item[(6)] $\Prog \qhld (\ats{L},\at{A}{W},\ats{R})\app(\theta,\rho)$ .
\end{itemize}
Therefore we have to prove (4), (5) and (6).

\paragraph{Proof of (4).} First, we can see that for every variable $y \in \Var$, if $y \in \mbox{vran}(\sigma_{0})$, then $y \notin \mbox{dom}(\sigma_{0})$ because $\sigma_{0}$ is idempotent. Hence, $y \in \mbox{vran}(\sigma_{0}) \Longrightarrow y\sigma_{1}\theta = y\sigma_{0}\sigma_{1}\theta =_{\mathrm{(1)}} y\theta$, and it is true that (7) $\sigma_{1}\theta = \theta$ $[\mbox{vran}(\sigma_{0})]$. Now, for any variable $x$ we can prove
$x\sigma_0\theta = x\theta$ by distinguishing two cases: 
a) if $x \notin \mbox{dom}(\sigma_{0})$ then $x\sigma_{0}\theta = x\theta$;
 and b) if $x \in \mbox{dom}(\sigma_{0})$ then $\mbox{var}(x\sigma_{0}) \subseteq \mbox{vran}(\sigma_{0}) \Longrightarrow x\sigma_{0}\theta =_{\mathrm{(7)}} x\sigma_{0}\sigma_{1}\theta =_{\mathrm{(1)}} x\theta$.

\paragraph{Proof of (5).} We have to prove that $\alpha\circ W\rho \sqsupseteq \beta$ and $\rho \in Sol(\Delta)$. $\alpha\circ W\rho =_{\mathrm{(2)}} \alpha\circ d\circ\infi\{W_{1}\rho, \ldots, W_{k}\rho\} = \infi\{\alpha\circ d\circ W_{1}\rho, \ldots, \alpha\circ d\circ W_{k}\rho\}$. 
It is enough proving $\alpha\circ d \circ W_{i}\rho \sqsupseteq \beta$ for $1 \le i \le k$. But $\{\alpha\circ d \circ W_{i} \sqsupseteq \beta \mid 1 \le i \le k\} \subseteq \Delta_{1}$ and $\rho \in Sol(\Delta_{1})$. $\rho \in Sol(\Delta)$ is trivial because $\Delta \subseteq \Delta_{1}$.

\paragraph{Proof of (6).} We can split (6) in the following three cases:
\begin{itemize}
    \item[(6a)] $\Prog \qhld \ats{L} \app (\theta,\rho)$. We prove that $\ats{L}\app(\theta,\rho) = \ats{L}\sigma_{1}\app(\theta,\rho)$ which, because of (3), can be inferred in $\qhl{\qdomm}$ from $\Prog$. We know that $\mbox{dom}(\sigma_{0}) \cap \mbox{var}(G_{0}) = \emptyset$, therefore, $\ats{L}\sigma_{1}\app(\theta,\rho) = \ats{L}\sigma_{0}\sigma_{1}\app(\theta,\rho) = \ats{L}\app(\sigma_{0}\sigma_{1}\theta,\rho) =_{(1)} \ats{L}\app(\theta,\rho)$.
    \item[(6b)] $\Prog \qhld \at{A}{W}\app(\theta,\rho)$. Using $W\rho = d\circ \infi\{W_{1}\rho, \ldots, W_{k}\rho\}$ which holds because of (2), 
    (3) and one inference step with clause $C_{1}$ and substitution $\sigma_{1}\theta$, we obtain $\Prog \qhld \at{H\sigma_{1}\theta}{W\rho}$.
    Now, because $\sigma_{1}$ is the m.g.u. between $A$ and $H$, we have $H\sigma_{1}\theta = A\sigma_1\theta$. 
    Therefore, we have $\Prog \qhld \at{A\sigma_{1}\theta}{W\rho}$. Finally, we note that $A\sigma_{1}\theta = A\sigma_0\sigma_1\theta =_{(1)} A\theta$ 
    because $\mbox{dom}(\sigma_0) \cap \mbox{var}(A) = \emptyset$.
       \item[(6c)] $\Prog \qhld \ats{R}\app(\theta,\rho)$. As in (6a). \qed
\end{itemize}
 \end{proof}

\subsubsection{Proof (of Soundness Theorem).}
Assume $G_{0} \qresn{n}{\sigma'} G$ where $G \equiv \sigma \sep \Delta$ is solved. Let $(\sigma,\mu)$ be the solution associated to $G$. We prove $(\sigma, \mu) \in \qsol{G_{0}}$ by  induction on $n$.

\begin{description}
            \item[Base.] In  this case, $n = 0$ and $G_{0} = G$ is solved and $\Prog \qhld \ats{A} \app(\sigma,\mu)$ is trivial because the sequence of atoms $\ats{A}$ of $G$  is empty. Moreover, $\mu \in Sol(\Delta)$ because $\mu = \omega_{\Delta}$.
            \item[Induction.] In this case we have $n > 0$ and $G_{0} \qres{} G_{1} \qresn{n-1}{} G$. Then we obtain $(\sigma, \mu) \in \qsol{G_{1}}$ by induction hypothesis, and therefore $(\sigma, \mu) \in \qsol{G_{0}}$ because of Lemma  \ref{lemmaSoundness}. \qed
 \end{description}

\subsection{Proof of the Completeness Theorem}

Before going into the proof, just a note on notation: as said, $(\theta, \rho) \in \qsol{G}$ means that the pair of substitutions $(\theta, \rho)$ is a solution of the goal $G \equiv \ats{A}\sep\sigma\sep\Delta$. Now, writing $(\theta, \rho) \in \qsoln{n}{G}$ we are expressing that the exact number of inference steps in $\Prog \qhld \ats{A}\app(\theta,\rho)$ is $n$, written as $\Prog \qhldn{n} \ats{A}\app(\theta,\rho)$.

\begin{lemma}[Completeness]\label{lemmaCompleteness}
Let $G_{0} \equiv \ats{A_{0}}\sep\sigma_{0}\sep\Delta_{0}$ be a goal not solved, and $(\theta_{0},\rho_{0}) \in \qsoln{n}{G_{0}}$. Let also $V_{0}$ be any finite set of variables such that $\mathrm{var}(G_{0}) \cup \mathrm{dom}(\theta_{0}) \subseteq V_{0}$. For any arbitrary selection of an atom $\at{A}{W}$ of $\ats{A_{0}}$, there exists some resolution step $G_{0} \qres{\sigma_{1}} G_{1}$ selecting the chosen atom and, in addition, some $(\theta_{1},\rho_{1})$ satisfying the following properties:
\begin{enumerate}
    \item[a.] $\theta_{1} = \theta_{0} \, [V_{0}]$
    \item[b.] $\sigma_{1}\theta_{1} = \theta_{1}$
    \item[c.] $\sigma_{0}\sigma_{1}\theta_{1} = \theta_{1}$
    \item[d.] $\rho_{1} \sqsupseteq \rho_{0} \, [\mathrm{war}(G_{0})]$
    \item[e.] $\rho_{1} \in Sol(\Delta_{1})$
    \item[f.] $\Prog \qhldn{n-1} \ats{A_{1}}\app(\theta_{1},\rho_{1})$
\end{enumerate}
In particular, (c), (e) and (f) mean that $(\theta_{1},\rho_{1}) \in \qsoln{n-1}{G_{1}}$.
\end{lemma}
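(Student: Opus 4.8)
The plan is to read off the required resolution step and the pair $(\theta_1,\rho_1)$ from a $\qhl{\qdomm}$ proof witnessing the solution $(\theta_0,\rho_0)$, and then verify the six conditions one by one. Write $G_0 \equiv \ats{L},\at{A}{W},\ats{R} \sep \sigma_0 \sep \alpha\circ W \sqsupseteq \beta, \Delta$, where $\at{A}{W}$ is the selected atom, $\alpha\circ W\sqsupseteq\beta$ its unique threshold constraint in $\Delta_0$, and $\Delta$ the remaining constraints. Since $(\theta_0,\rho_0)\in\qsoln{n}{G_0}$ we have $\theta_0 = \sigma_0\theta_0$, $\rho_0\in Sol(\Delta_0)$, and in particular a $\qhl{\qdomm}$ proof tree $T$ of $\at{A\theta_0}{W\rho_0}$ whose root is a $\mathrm{QMP}(\qdomm)$ step using a clause $C\in\Prog$. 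Choose as the clause of the resolution step a \emph{fresh} variant $C_1 \equiv (H \qgets{d} B_1,\ldots,B_k) \in_{\mathrm{var}}\Prog$ of that same clause, and let $W_1,\ldots,W_k$ be fresh qualification variables; from the root of $T$ extract a matching substitution $\vartheta$ (so that $A\theta_0 = H\vartheta$ after renaming into $C_1$), annotations $d_1,\ldots,d_k \in D\setminus\{\bot\}$ for the premises $\at{B_i\vartheta}{d_i}$, subtrees $T_i$ proving them in $n_i$ steps with $1+\sum_i n_i = n$, and the inequality $W\rho_0 \sqsubseteq d\circ\infi\{d_1,\ldots,d_k\}$.

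Next I would check that this data really yields a resolution step selecting $\at{A}{W}$. First, $A$ and $H$ unify, because $\theta_0$ on $V_0$ together with the renaming into $C_1$ on its fresh variables is a common unifier (the standard lifting argument, using that $\var(C_1)$ is disjoint from $V_0$). Second, the enabling condition $d\circ\alpha\sqsupseteq\beta$ holds: $W\rho_0\sqsubseteq d\circ\infi\{d_1,\ldots,d_k\}\sqsubseteq d\circ\top = d$, and $\alpha\circ W\rho_0\sqsupseteq\beta$ because $\rho_0$ satisfies the threshold for $W$, so by monotonicity $d\circ\alpha \sqsupseteq \alpha\circ W\rho_0 \sqsupseteq\beta$. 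Let $\sigma_1$ be the computed m.g.u.; this fixes $G_1$ and $\Delta_1 = d\circ\alpha\circ W_1\sqsupseteq\beta,\ldots,d\circ\alpha\circ W_k\sqsupseteq\beta, W = d\circ\infi\{W_1,\ldots,W_k\},\Delta$. From the common unifier I would then obtain $\theta_1$ by the usual manipulation (an idempotent representative of $\sigma_1$ composed with the residual substitution), giving (a) $\theta_1 = \theta_0\,[V_0]$, (b) $\sigma_1\theta_1 = \theta_1$ and (c) $\sigma_0\sigma_1\theta_1 = \theta_1$ — all routine, using idempotency of $\sigma_0,\sigma_1$, freshness of $C_1$, and $\sigma_0\theta_0 = \theta_0$.

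The substitution $\rho_1$ is the delicate point. I would set $\rho_1(W_i):= d_i$ for the fresh variables, $\rho_1(W):= d\circ\infi\{d_1,\ldots,d_k\}$ (which is $\sqsupseteq W\rho_0$, hence $\neq\bot$), and then propagate this single increase \emph{upwards} along the defining constraints of $\Delta$: whenever an already-increased variable occurs on the right-hand side of a defining constraint $V = e\circ\infi\{\ldots\}$ of $\Delta$, recompute $\rho_1(V)$ by that constraint; on every other variable $\rho_1$ agrees with $\rho_0$. This cascade terminates because $>^{*}_{\Delta}$ is irreflexive on a finite set, it only ever raises values (monotonicity of $\circ$ and $\infi$, which also keeps everything $\neq\bot$ since the original $\rho_0$-values were), and — crucially — it never touches any qualification variable of $G_0$ other than $W$, because annotation variables never occur on the left-hand side of a defining constraint and hence are leaves of the forest of defining constraints. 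This gives (d) $\rho_1\sqsupseteq\rho_0\,[\warset{G_0}]$ immediately. For (e): the new defining constraint for $W$ holds by construction; each new threshold $d\circ\alpha\circ W_i\sqsupseteq\beta$ holds because $W\rho_0\sqsubseteq d\circ\infi\{d_1,\ldots,d_k\}\sqsubseteq d\circ d_i$ and $\alpha\circ W\rho_0\sqsupseteq\beta$, so $d\circ\alpha\circ d_i\sqsupseteq\beta$; the defining constraints of $\Delta$ hold after the propagation; and the remaining thresholds of $\Delta$ (those of the annotation variables $\neq W$) are untouched.

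Finally (f): $\ats{A_1} = (\ats{L},\at{B_1}{W_1},\ldots,\at{B_k}{W_k},\ats{R})\sigma_1$, and applying $(\theta_1,\rho_1)$ turns the atoms coming from $\ats{L},\ats{R}$ back into their $\app(\theta_0,\rho_0)$ forms (by (b), (a), and $\rho_1 = \rho_0$ off $W$), each provable by the corresponding subtree of the original $n$-step proof of $\ats{A_0}\app(\theta_0,\rho_0)$; while $\at{B_i}{W_i}\sigma_1\app(\theta_1,\rho_1) = \at{B_i\vartheta}{d_i}$ is provable by $T_i$ in $n_i$ steps. Adding up: the selected atom contributed $1+\sum_i n_i$ steps in the original proof, the atoms of $\ats{L},\ats{R}$ contributed the rest, and in $\ats{A_1}\app(\theta_1,\rho_1)$ those are replaced by $\sum_i n_i$ steps, for a total of $n-1$; together with (c) and (e) this is exactly $(\theta_1,\rho_1)\in\qsoln{n-1}{G_1}$. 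The main obstacle is precisely the construction and verification of $\rho_1$: one must notice that raising $\rho_1(W)$ is \emph{forced} by the equational defining constraint introduced in the step (in particular when $k=0$), show that the induced cascade of increases is well-founded, and check that it disturbs none of the other threshold constraints of the goal; the rest is the standard lifting-lemma bookkeeping.
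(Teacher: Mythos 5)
Your proposal follows the same overall route as the paper's proof: peel off the root $\mathrm{QMP}(\qdomm)$ step of the $\qhl{\qdomm}$ proof of $\at{A\theta_0}{W\rho_0}$ to obtain a fresh clause variant $C_1$ and a matching substitution $\eta_0$, set $\theta_1=\theta_0\uplus\eta_0$ and $\sigma_1=\mathrm{mgu}(A,H)$ (giving (a)--(c) by the standard lifting bookkeeping), put $\rho_1(W_i)=d_i$ and $\rho_1(W)=d\circ\infi\{d_1,\ldots,d_k\}$, check the enabling condition $d\circ\alpha\sqsupseteq\beta$ by monotonicity, and count steps for (f). The one place you genuinely diverge is the definition of $\rho_1$ on the remaining qualification variables. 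The paper keeps $\rho_1=\rho_0$ outside $\{W,W_1,\ldots,W_k\}$ and asserts in one line that the residual constraints $\Delta$ stay satisfied; you instead propagate the increase of $\rho_1(W)$ upward along the defining constraints of $\Delta$. Your extra care is warranted: in a non-initial goal the selected variable $W$ typically occurs on the right-hand side of the defining constraint of its ``parent'' variable in $\Delta$, and since defining constraints are equalities, raising $\rho_1(W)$ from $W\rho_0$ to $d\circ\infi\{d_1,\ldots,d_k\}$ --- which is forced by the new defining constraint, and can be a strict increase because $\mathrm{QMP}(\qdomm)$ only requires $\sqsubseteq$ --- can falsify that equality unless the parent's value is recomputed as well. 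Your cascade, well-founded by irreflexivity of $>^{*}_{\Delta}$ on a finite set, monotone so that (d) survives, and confined to variables that carry defining rather than threshold constraints so that no threshold constraint of another goal atom is disturbed, closes this point cleanly. So the proposal is correct, takes essentially the paper's approach, and at the only delicate step is actually more explicit than the paper's own argument.
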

\begin{proof}
Assume $\at{A}{W}$ to be the selected atom in $G_{0}$. Then, $G_{0} \equiv \ats{L_{0}},\at{A}{W},\ats{R_{0}}$ $\sep\sigma_{0}\sep \alpha\circ W \sqsupseteq \beta, \Delta$. Because of the lemma's hypothesis we can also assume the following:
\begin{itemize}
    \item[(0)] $\sigma_{0}\theta_{0} = \theta_{0}$ ,
    \item[(1)] $\rho_{0} \in Sol(\Delta_{0})$ ,
    \item[(2)] $\Prog \qhldn{m_{1}} \ats{L_{0}}\app(\theta_{0},\rho_{0})$ ,
    \item[(3)] $\Prog \qhldn{m_{2}} \at{A}{W}\app(\theta_{0},\rho_{0})$ , and
    \item[(4)] $\Prog \qhldn{m_{3}} \ats{R_{0}}\app(\theta_{0},\rho_{0})$ 
\end{itemize}
with $m_{1} + m_{2} + m_{3} = n > 0$.

Because of (3) there must exist some clause $C_{1} \equiv (H \qgets{d} B_{1}, \ldots, B_{k}) \in_{\mathrm{var}} \Prog$ and some substitution $\eta_{0}$ such that 
\begin{itemize}
    \item[(5)] $A\theta_{0} = H\eta_{0}$ and $\Prog \qhldn{m_{2}-1} \at{B_{1}\eta_{0}}{d_{1}}, \ldots, \at{B_{k}\eta_{0}}{d_{k}}$ with $d_{1}, \ldots, d_{k} \in \aqdomm$ such that $W\rho_{0} \sqsubseteq d\circ \infi\{d_{1}, \ldots, d_{k}\}$.
\end{itemize}
It is possible to choose $C_{1}$ and $\eta_{0}$ so that $\mathrm{var}(C_{1}) \cap V_{0} = \emptyset$ and $\mathrm{dom}(\eta_{0}) \subseteq \mathrm{var}(C_{1})$. Therefore, it is guaranteed that $\mathrm{dom}(\eta_{0}) \cap \mathrm{dom}(\theta_{0}) = \emptyset$ and then:
\begin{itemize}
    \item[(6)] $\theta_{1} =_{\mathrm{def}} \theta_{0} \uplus \eta_{0}$ is a well-founded substitution that satisfies: $\mathrm{dom}(\theta_{1}) = \mathrm{dom}(\theta_{0}) \uplus \mathrm{dom}(\eta_{0})$; $\theta_{1} = \theta_{0} \, [V_{0}]$; $\theta_{1} = \eta_{0} \, [\backslash V_{0}] \Longrightarrow$ ($a$) of lemma.
\end{itemize}

From (5) and (6) we know that $\theta_{1}$ is an unifier of $A$ and $H$. Choosing $\sigma_{1}$ as the m.g.u. (in the Robinson's sense) between $A$ and $H$ we will have:
\begin{itemize}
    \item[(7)] $A\sigma_{1} = H\sigma_{1}$ and $\sigma_{1}\theta_{1} = \theta_{1} \Longrightarrow$ ($b$) of lemma.
\end{itemize}
Then, taking $\rho_{1}$ such that
\begin{itemize}
    \item[(8)] $W'\rho_{1} =_{\mathrm{def}}
        \left\{\begin{array}{ll}
            d_{i} & \textrm{if } W' = W_{i} \textrm{ for some } 1 \le i \le k \\
            d\circ \infi\{d_{1}, \ldots, d_{k}\} & \textrm{if } W' = W \\
            W'\rho_{0} & \textrm{otherwise}
        \end{array}\right.$
\end{itemize}
we will have, by (8) and (5), $\rho_{1} \sqsupseteq \rho_{0} \, [\mathrm{war}(G_{0})] \Longrightarrow$ ($d$) of lemma. 

Now, doing a resolution step with $\sigma_{1}$ and $C_{1}$ we get $G_{0} \qres{\sigma_{1},C_{1}} G_{1} \equiv \ats{A_{1}}\sep\sigma_{0}\sigma_{1}\sep \Delta_{1}$ where $\ats{A_{1}} = (\ats{L_{0}},\at{B_{1}}{W_{1}},\ldots, \at{B_{k}}{W_{k}},\ats{R_{0}})\sigma_{1}$ and $\Delta_{1} \equiv d\circ \alpha\circ W_{1} \sqsupseteq \beta, \ldots, d\circ \alpha\circ W_{k} \sqsupseteq \beta, W = d\circ \infi\{W_{1}, \ldots, W_{k}\}, \Delta$. Note that we can deduce $d\circ\alpha\sqsupseteq\beta$ from (5), (1) and the axioms required for the attenuation operation $(\circ)$ in any qualification domain, because we have $W\rho_{0} = d\circ\infi\{d_{1},\ldots,d_{k}\}$ and $\alpha\circ W\rho_{0} \sqsupseteq \beta \Longrightarrow \alpha\circ d\circ\infi\{d_{1},\ldots,d_{k}\} \sqsupseteq \beta \Longrightarrow \alpha\circ d\sqsupseteq \beta$.
Remember from Definition \ref{defStep} that the condition $\alpha\circ d\sqsupseteq \beta$ is required for the resolution step to be enabled.

To finish the proof we only need to prove ($c$), ($e$) and ($f$).

\paragraph{Proof of (c).}
$\sigma_{0}\underline{\sigma_{1}\theta_{1}} =_{(b)} \sigma_{0}\theta_{1} =_{(6)} \sigma_{0}(\theta_{0}\uplus\eta_{0}) =_{(*)} \sigma_{0}\theta_{0}\uplus\eta_{0} =_{(0)} \theta_{0}\uplus\eta_{0} = \theta_{1}$. \\
($*$) Because $\mathrm{vran}(\sigma_{0}) \subseteq V_{0}$ and $\mathrm{dom}(\eta_{0}) \cap V_{0} = \emptyset$.

\paragraph{Proof of (e).}
We have to see that $\rho_{1}$ satisfies every constraint in $\Delta_{1}$:
\begin{enumerate}
    \item $W = d\circ \infi\{W_{1}, \ldots, W_{k}\}$. This is satisfied by definition of $\rho_{1}$.
    \item $d\circ \alpha\circ W_{i} \sqsupseteq \beta$ for $1 \le i \le k$. We know from (1) that $\alpha\circ W\rho_{0} \sqsupseteq \beta$, and from (d) follows $W\rho_{1} \sqsupseteq W\rho_{0}$. Therefore, $\alpha\circ W\rho_{1} \sqsupseteq \beta$.
    Because of (8) we also know $W\rho_{1} = d\circ \infi\{W_{1}\rho_{1}, \ldots, W_{k}\rho_{1}\}$ that implies
    $\alpha \circ W\rho_{1}  =  \infi\{d \circ \alpha \circ W_{1}\rho_{1}, \ldots, d \circ \alpha \circ W_{k}\rho_{1}\}$
    Hence $\alpha\circ W\rho_{1}  \sqsupseteq \beta$ implies that 
    $d\circ\alpha\circ W_{i}\rho_{1} \sqsupseteq \beta$ is satisfied for every $1 \le i \le k$.
    \item $\Delta$. From (7) follows that $\rho_{1} = \rho_{0} \, [\mathrm{war}(\Delta)]$, and because of (1) $\rho_{0} \in Sol(\Delta) \Longrightarrow \rho_{1} \in Sol(\Delta)$.
\end{enumerate}

\paragraph{Proof of (f).}
First we can deduce:
\begin{itemize}
    \item[(9)] $\Prog \qhldn{m_{1}} \ats{L_{0}}\app(\theta_{1},\rho_{1})$ by (2), (6) and (8).
    \item[(10)] $\Prog \qhldn{m_{2}-1} (\at{B_{1}}{W_{1}}, \ldots, \at{B_{k}}{W_{k}})\app(\theta_{1},\rho_{1})$ by (5), (6) and (8).
    \item[(11)] $\Prog \qhldn{m_{3}} \ats{R_{0}}\app(\theta_{1},\rho_{1})$ by (4), (6) and (8).
\end{itemize}
Considering that $m_{1} + (m_{2} - 1) + m_{3} = n - 1$, (9), (10) and (11) imply that $\Prog \qhldn{n-1} (\ats{L_{0}},\at{B_{1}}{W_{1}}, \ldots, \at{B_{k}}{W_{k}},\ats{R_{0}})\app(\theta_{1},\rho_{1})$; and this is ($f$) due to ($b$). \qed

\end{proof}

\subsubsection{Proof (of Completeness Theorem).}
As $G_{0}$ is a goal, we know that $\sigma_{0}$ is idempotent and that $\ats{A_{0}}\sigma_{0} = \ats{A_{0}}$. Now, as $(\theta_{0},\rho_{0}) \in \qsol{G_{0}}$, we can choose a number $n\in\NAT$ such that $(\theta_{0},\rho_{0}) \in \qsoln{n}{G_{0}}$ and therefore we have (1) $\sigma_{0}\theta_{0} = \theta_{0}$, (2) $\rho_{0} \in Sol(\Delta_{0})$ and (3) $\Prog \qhldn{n} \ats{A_{0}} \app (\theta_{0},\rho_{0})$. We can also choose a finite set of variables $V_{0}$ satisfying (4) $\mathrm{var}(G_{0}) \cup \mathrm{dom}(\theta_{0}) \subseteq V_{0}$. Given the conditions (1) to (4), we will prove the following:
\begin{itemize}
    \item[($\dagger$)] There exist some resolution computation $G_{0} \qresn{*}{\sigma} \sigma_{0}\sigma\sep\Delta$ (that we can build making use of any selection strategy) ending in a solved goal, and some substitution $\theta$ satisfying (5) $\theta = \theta_{0} \, [V_{0}]$, (6) $\sigma\theta = \theta$ and (7) $\sigma_{0}\sigma\theta = \theta$.
\end{itemize}
From ($\dagger$) follows the theorem's thesis (except (8) $\mu \sqsupseteq \rho_{0} \, [\mathrm{war}(G_{0})]$) because:
\begin{itemize}
    \item $(6) \Longrightarrow_{(5)} \sigma\theta = \theta_{0} \, [V_{0}] \Longrightarrow \sigma \preccurlyeq \theta_{0} \, [\mathrm{var}(G_{0})]$
    \item $(7) \Longrightarrow_{(5)} \sigma_{0}\sigma\theta = \theta_{0} \, [V_{0}] \Longrightarrow \sigma_{0}\sigma \preccurlyeq \theta_{0} \, [\mathrm{var}(G_{0})]$
\end{itemize}
We simultaneously prove ($\dagger$) and (8) by induction on $n$:

\begin{description}
    \item[Base.] If $n=0$, (2) implies that $\ats{A_{0}}$ is empty. Then, taking $\sigma = \epsilon$ and $\theta = \theta_{0}$ we have that $G_{0} \qresn{0}{\epsilon} \sigma_{0}\sep\Delta_{0}$ with is a trivial resolution of 0 steps; and in addition:
    \begin{itemize}
        \item (5) reduces to $\theta_{0} = \theta_{0} \, [V_{0}]$, which is trivial.
        \item (6) reduces to $\theta_{0} = \theta_{0}$, which also is trivial.
        \item (7) reduces to $\sigma_{0}\theta_{0} = \theta_{0}$ which is true given (1).
        \item (8) is satisfied because $\mu = \rho_{0} \, [\mathrm{war}(G_{0})]$ is true, given that $\mathrm{war}(G_{0}) = \mathrm{war}(\Delta_{0})$. Now, as $\Delta_{0}$ is solved, it only contains defining constraints and therefore (2) implies that for any $W \in \mathrm{war}(\Delta_{0})$ it is true that $W\rho_{0} = \omega_{\Delta_{0}}(W) = W\mu$.
    \end{itemize}

    \item[Induction.] If $n>0$, (2) implies that $\ats{A_{0}}$ is not empty. Hence, selecting an atom $\at{A}{W}$ in $\ats{A_{0}}$ with any selection strategy and using the Completeness Lemma \ref{lemmaCompleteness}, we can perform a resolution step \[(9) \quad G_{0} \equiv \ats{A_{0}}\sep\sigma_{0}\sep\Delta_{0} \qres{\sigma_{1}} \ats{A_{1}}\sep\sigma_{0}\sigma_{1}\sep\Delta_{1} \equiv G_{1}\] having that there exists some solution $(\theta_{1},\rho_{1}) \in \qsoln{n-1}{G_{1}}$ satisfying all 6 conditions guaranteed by the lemma: (10) $\theta_{1} = \theta_{0} \, [V_{0}]$, (11) $\sigma_{1}\theta_{1} = \theta_{1}$, (1') $\sigma_{0}\sigma_{1}\theta_{1} = \theta_{1}$, (12) $\rho_{1} \sqsupseteq \rho_{0} \, [\mathrm{war}(G_{0})]$, (2') $\rho_{1} \in Sol(\Delta_{1})$, and (3') $\Prog \qhldn{n-1} \ats{A_{1}}\app(\theta_{1},\rho_{1})$.

    Let $V_{1}$ be any finite set of variables such that \[(4') \quad V_{0} \cup \mathrm{var}(G_{1}) \cup \mathrm{dom}(\theta_{1}) \subseteq V_{1} \enspace .\]

    Conditions (1'), (2'), (3') and (4') are similar, respectively, to (1), (2), (3) and (4), but now for $(\theta_{1}, \rho_{1}) \in \qsoln{n-1}{G_{1}}$. By 
    induction hypothesis we can obtain a resolution computation \[(13) \quad G_{1} \qresn{*}{\sigma'} \sigma_{0}\sigma_{1}\sigma'\sep\Delta'\] and a substitution $\theta$ such that
    \begin{itemize}
        \item[] (5') $\theta = \theta_{1} \, [V_{1}] \quad$ (6') $\sigma'\theta = \theta \quad$ (7') $\sigma_{0}\sigma_{1}\sigma'\theta = \theta$
        \item[] (8') $\mu' \sqsupseteq \rho_{1} [\mathrm{war}(G_{1})]$ with $(\sigma_{0} \sigma_{1} \sigma', \mu')$ the associated solution to $\sigma_{0}\sigma_{1}\sigma'$ $\sep \Delta'$.
    \end{itemize}

    From (9) and (13) results \[G_{0} \equiv \ats{A_{0}}\sep\sigma_{0}\sep\Delta_{0} \qres{\sigma_{1}} G_{1} \equiv \ats{A_{1}}\sep\sigma_{0}\sigma_{1}\sep\Delta_{1} \qresn{*}{\sigma'} \sigma_{0}\sigma_{1}\sigma'\sep\Delta' \enspace .\]

    Now there is only left to check that (5), (6) and (7) are satisfied given the same $\theta$ that satisfies (5'), (6') and (7') and $\sigma = \sigma_{1}\sigma'$; and that (8) is also satisfied when $\mu = \mu'$. In fact:
    \begin{itemize}
        \item (5) trivially follows from (5'), (4') and (10).
        \item (6) comes from the following: by (4') we can assume $\theta = \theta_{1} \uplus \eta'$, with $\eta'$ such that $\mathrm{dom}(\eta') \cap (V_{1}) = \emptyset$. Then: $\underline{\sigma}\theta = \sigma_{1}\underline{\sigma'\theta} =_{(6')} \sigma_{1}\underline{\theta} = \sigma_{1}(\theta_{1} \uplus \eta') =_{(*)} \underline{\sigma_{1}\theta_{1}} \uplus \eta' =_{(11)} \theta_{1} \uplus \eta' = \theta$.
The step $(*)$ is correct because $\mathrm{vran}(\sigma_{1}) \subseteq V_{1}$ and $\mathrm{dom}(\eta') \cap V_{1} = \emptyset$.
        \item (7) trivially follows from (7'), given that $\sigma = \sigma_{1}\sigma'$.
        \item (8) is consequence of (8') and (12), because $\mathrm{war}(G_{0}) \subseteq \mathrm{war}(G_{1})$. \qed
    \end{itemize}
\end{description} 

\end{document}